\newtheorem{defi}{Definition}[section]
\newtheorem{lem}[defi]{Lemma}
\newcommand{\be}{\begin{equation}}
\newcommand{\ee}{\end{equation}}
\newcommand{\R}{\mathbb{R}}
\newcommand{\liq}{L^\infty(Q)}
\newcommand{\ds}{\displaystyle}
\newcommand{\om}{\Omega}
\newfont{\deut}{eufm10}
\newcommand{\vsp}{\vspace{1ex}}
\journalname{Computational Optimization and Applications}
\begin{document}

\title{Optimization of nonlocal time-delayed feedback controllers
\thanks{This work was supported by DFG in the framework of the Collaborative
Research Center SFB 910, projects A1 and B6.}}

\titlerunning{Nonlocal time-delayed feedback}        

\author{P. Nestler \and E.  Sch\"oll \and  F. Tr\"oltzsch}

\institute{Peter Nestler \at
Institut f\"ur Mathematik,
Technische Universit\"at Berlin, D-10623 Berlin, Germany
\\
\email{nestler@math.tu-berlin.de}           
           \and
           Eckehard Sch\"oll \at
Institut f\"ur Theoretische Physik, Technische Universit\"at Berlin, D-10623 Berlin, Germany\\
\email{schoell@physik.tu-berlin.de}
 \and
 Fredi Tr\"oltzsch\at
 Institut f\"ur Mathematik,
 Technische Universit\"at Berlin, D-10623 Berlin, Germany\\
 \email{troeltzsch@math.tu-berlin.de}
 }

\date{Received: date / Accepted: date}

\maketitle

\begin{abstract}
A class of Pyragas type nonlocal feedback controllers with time-delay is investigated for the Schl\"ogl model.
The main goal is to find an optimal kernel in the controller such that the associated solution of the controlled
equation is as close as possible to a desired spatio-temporal pattern. An optimal kernel is the solution to
a nonlinear optimal control problem with the kernel taken as control function. The well-posedness of the optimal
control problem and necessary optimality conditions are discussed for different types of kernels.
Special emphasis is laid on time-periodic
functions as desired patterns. Here, the cross correlation between the state and the desired pattern
is invoked to set up an associated objective
functional that is to be minimized. Numerical examples are presented for the 1D Schl\"ogl model and
a class of simple step functions for the kernel.
\end{abstract}
 \keywords{Schl\"ogl model, Nagumo equation, Pyragas type feedback control, nonlocal delay, 
controller optimization, numerical method}


%
%

\section{Introduction}
\label{S1} \setcounter{equation}{0}

In this paper, we consider a class of nonlocal feedback controllers
with application to the control of certain  nonlinear partial differential equations.
The research on feedback control laws of this type has become quite active in theoretical physics
for stabilizing  wave-type solutions of reaction-diffusion systems such as the Schl\"ogl model
(also known as Nagumo or  Chafee-Infante equation) or the FitzHugh-Nagumo system.

The controllers can be characterized as follows: First of all, they are a generalization of Pyragas
type controllers that became very popular in the past. We refer to \cite{pyragas1992}, \cite{pyragas2006},  
 and the survey volume \cite{schoell_schuster2008}. In the simplest form of Pyragas type feedback control, 
the  difference of the  current
state $u(x,t)$ and the retarded state $u(x,t-\tau)$, multiplied with a real number $\kappa$,
is taken as control, i.e. the feedback control $f$ is
\[
f(x,t) :=  \kappa \, (u(x,t) - u(x,t - \tau)),
\]
where $\tau$ is a fixed time delay and $\kappa$ is the feedback gain.

In the nonlocal generalization we consider in this paper, the feedback control is set up  by an integral operator of the form
\be
 f(x,t) := \kappa \, \left(\int_{0}^{T} g(\tau) u(x,t-\tau)\,d\tau - u(x,t)\right).
\label{E1.1}
\ee
Here, different time delays appear in a distributed way. Depending on the particular choice of the kernel $g$,
various spatio-temporal patterns of the controlled solution $u$ can be achieved. We refer to \cite{bachmair_schoell14,loeber_etal2014,siebert_schoell14}, and \cite{siebert_alonso_baer_schoell14} with application to the Schl\"ogl model and to \cite{atay2003,kyrichenko_blyuss_schoell2014,wille_lehnert_schoell2014} with respect to control of ordinary differential 
equations.

Our main goal  is the selection of the kernel $g$ in an optimal way. We want to achieve a desired
spatio-temporal pattern for the resulting state function and look for an optimal feedback kernel $g$ to approximate this
pattern as closely as possible.
For this purpose, in the second half of the paper we will concentrate on a particular  choice of $g$ as
a step function.

We are optimizing feedback controllers but we shall apply methods of optimal control to
achieve our goal. This leads to new optimal control problems for reaction-diffusion equations
containing nonlocal terms with time delay in the state equation.
We develop the associated necessary optimality conditions and discuss numerical approaches
for solving the problems posed. Working on this class of problems, we observed that standard quadratic tracking
type objective functionals are possibly not the right tool for approximating desired time-periodic patterns. We found
out that  the so-called cross correlation partially better fits to our goals. We report on our numerical tests at the end
of this paper.

This research contributes results to the optimal control of nonlinear reaction diffusion
equations, where wave type solutions such as traveling wave fronts or spiral waves occur in unbounded domains.
We mention the papers \cite{borzi_griesse06,brandao_etal08,ckp09} on the optimal control of systems that develop spiral waves or \cite{kun_nag_cha_wag2011,kunisch_wagner2012-3,kunisch_wang12} on systems with heart medicine as
background. Moreover, we refer to \cite{buch_eng_kamm_tro2013,casas_ryll_troeltzsch2014b},
where different numerical and theoretical aspects of optimal control of the Schl\"ogl or FitzHugh-Nagumo equations are
discussed. It is a characteristic feature of such systems that the computed optimal solutions might be unstable
with respect to perturbations in the data, in particular initial data.

Feedback control aims at generating stable solutions. Various techniques of feedback control are known, we refer
only to the monographies \cite{coron07,lasiecka_triggiani2000a,lasiecka_triggiani2000b,Krstic2010} and to the references cited therein. Moreover, we mention \cite{gugat_troeltzsch2013} on feedback stabilization for the Schl\"ogl model. Pyragas type feedback control is one
associated field of research that became very active, cf. \cite{schoell_schuster2008} for an account on current research in this field. In associated publications, the feedback control laws were considered as given. For instance, the kernel in nonlocal delayed
feedback was given and it was studied what kind of patterns arise from different choices of the kernel.

The novelty of our paper is that we study an associated inverse (say design) problem: Find a kernel such that the
associated feedback solution best approximates a desired pattern.

\section{Two models of feedback control}
\label{S2} \setcounter{equation}{0}

We consider the following semilinear parabolic equation with reaction term $R$ and control function
(forcing) $f$,
\begin{equation}
 \partial_t u - \Delta u +R(u) = f
\label{E:2.1}
\end{equation}
subject to appropriate initial and boundary conditions in a spatio-temporal domain $\Omega \times (0,T)$. Using a feedback control  in the form
\eqref{E1.1},
we arrive at the following nonlinear initial-boundary value problem that includes a nonlocal
term with time delay,
\be  \label{E:2.2}
\begin{array}{rcll}
\partial_tu(x,t) - \Delta u(x,t)+R(u(x,t)) &=&\ds \kappa \left( \int_0^T g(\tau) u(x,t-\tau)\,d\tau - u(x,t)\right)& \mbox{in }\om, \\[2ex]
 u(x,s)&=& u_0(x,s)& \mbox{in }\om,\\[1ex]
\ds {\partial_n u}(x,t) &=& 0&\mbox{on } \Gamma,
\end{array}
\ee
for almost all $t \in (0,T), \, s \in [-T,0]$.
\vsp

Here, $\partial_n$ denotes the outward normal derivative on $\Gamma=\partial \Omega$.
We want to determine a feedback kernel $g \in L^\infty(0,T) $ such that the solution $u$ to \eqref{E:2.2}
is as close as possible to a desired function $u_d$. The function $g$ will have to obey certain restrictions, namely
\begin{eqnarray}
0 \le g(t) &\le& \beta \quad \mbox{ a.e.  on } [0,T], \label{E:2.3}\\
\ds \int_0^T g(s)\, ds &=& 1,\label{E:2.4}
\end{eqnarray}
where $\beta> 0$ is a given (large) positive constant. This upper bound is chosen to have a uniform bound for $g$.
It is needed for proving the solvability of the optimal control problem.

We shall present the main part of our theory for the general type of $g$ defined above. In our numerical computations, however, we will concentrate on functions $g$ of the following particular form: We select $t_1,\, t_2$ such that $0 \leq t_1 < t_2 \leq T$, $t_2-t_1\ge \delta > 0$
 and define
\begin{equation} \label{E:2.5}
 g(t) = \left\{
 \begin{array}{cl} \ds \frac{1}{t_2 -t_1}, & t_1 \leq t \leq t_2 \\[2ex]
 0, & \mbox{elsewhere}.
 \end{array} \right.
\end{equation}
It is obvious that $g$ satisfies the constraints \eqref{E:2.3},\eqref{E:2.4} with $\beta = 1/\delta$.
Using this form for $g$, we end up with the particular feedback equation
\be \label{E:2.6}
 \partial_tu(x,t) - \Delta u(x,t) +R(u(x,t)) = \kappa \, \left(\frac{1}{t_2-t_1} \int_{t_1}^{t_2} u(x,t-\tau) \, d\tau - u(x,t)\right).
\ee
In \eqref{E:2.6}, we will also vary $\kappa$ in the state equation as part of the control variables to be optimized. In contrast to this,
$\kappa$ is assumed to be fixed in the model with a general control function $g$.
In the special model, we have a restricted
flexibility in the optimization, because only the real numbers $k, \, t_1, \, t_2$ can be varied. Yet, we are able to generate a class of interesting time-periodic patterns.

Throughout the paper we will rely on the following
\vsp

\noindent {\bf Assumptions.} The set $\om \subset \mathbb{R}^N$, $N\leq 3$,  is a bounded Lipschitz domain;
for $N = 1$, we set $\Omega=(a,b)$. By $T > 0$, a finite terminal time is fixed. In theoretical physics, also the
choice $T = \infty$ is of interest. However, we do not investigate the associated analysis, because an infinite
time interval requires the use of more complicated function spaces. Moreover, the restriction to a bounded interval
fits better to the numerical computations. Throughout the paper, we use the notation $Q:= \Omega \times (0,T)$ and
$\Sigma = \Gamma \times (0,T)$.
for the space-time cylinder.
\begin{remark} We will often use the term ''wave type solution'' or ''traveling wave''. 
 This is a function $(x,t) \mapsto u(x,t)$ that 
 can be represented in the form $u(x,t) = v(x-c\,t)$ with some other smooth function $v$. Here, $c$ is the velocity of the
 wave type solution. Such solutions are known to exist  in $\Omega = \mathbb{R}$ but  not in
 in a bounded interval $\Omega=(a,b)$. 
 
 In our paper, the terms '' wave type solution'' or ''traveling wave''  stand for solutions of the Schl\"ogl model
 in the bounded domain $(a,b)$. We use these terms, since the computed solutions exhibit a similar behavior 
 as associated solutions in $\Omega = \mathbb{R}$.
\end{remark}

The reaction term $R$ is defined by
\be \label{E:2.7}
R(u)=\rho\, (u-u_1)(u-u_2)(u-u_3),
\ee
where $u_1\le u_2\le u_3$ and $\rho > 0$ are fixed real numbers. In our computational examples, we will take $\rho:=1$.
The numbers $u_i$, $i = 1,\ldots, 3$,  define the fixed points of the (uncontrolled) Schl\"ogl
model \eqref{E:2.1}. In view of the time delay, we have to provide initial values $u_0$ for $u$ in the interval
$[-T,0]$ for the general model \eqref{E:2.2} and in $[-t_2,0]$ for the special model \eqref{E:2.6}.
We assume $u_0 \in C(\bar \om \times[-T,0])$ or $u_0 \in C(\bar \om \times[-t_2,0])$, respectively. The desired state 
$u_d$ is assumed to be bounded and measurable on $Q$. 

\section{Well-posedness of the feedback equation}

In this section, we prove the existence and uniqueness of a solution to the general feedback equation \eqref{E:2.2}.
To this aim, we first reduce the equation to an inhomogeneous initial-boundary value problem. For $t \in [0,T]$, we write
\begin{eqnarray*}
  \int_0^T g(\tau) u(x,t-\tau) \, d\tau &=&  \int_0^t g(\tau) u(x,t-\tau) \, d\tau +\underbrace{\int_t^T g(\tau) u(x,t-\tau) \, d\tau }_{=: U_g(x,t)} \\
&=& \int_0^t g(\tau) u(x,t-\tau) \, d\tau +  U_g(x,t).
\end{eqnarray*}
The function $U_g$ is associated with the fixed initial function $u_0$ and is defined by
\[
U_g(x,t) = \int_t^T  g(\tau) u_0(x,t-\tau) \, d\tau;
\]
notice that we have $t-\tau \le 0$ in the integral above. By the assumed continuity of $u_0$, the function $U_g$
belongs to $C(\bar \om \times [0,T])$.

Next, for given $g \in L^2(0,T)$, we introduce a linear integral operator $K(g): L^2(Q) \to L^2(Q)$ by
\be \label{E:3.1}
(K(g)u)(x,t):= \int_0^t g(\tau) u(x,t-\tau) \, d\tau.
\ee
Substituting $s = t-\tau$, we obtain the equivalent representation
\[
(K(g)u)(x,t)=   \int_0^t  g(t-s) u(x,s) \, ds.
\]
Inserting $U_g$ and $K(g)$ in the state equation \eqref{E:2.2}, we obtain the following nonlocal
initial-boundary value problem:
\be \label{E:3.2}
 \left\{ \begin{array}{rcll} \partial_tu -\Delta u +R(u) + \kappa \, u -  \kappa \,  K(g) u  &=&  \kappa \, U_g& \mbox{in }Q, \\[1ex]
          u(x,0)&=&u_0(x,0)& \mbox{in }\om, \\[1ex]
          \partial_n u &=&0& \mbox{on }\Sigma.
         \end{array}
\right.
\ee
In the next theorem, we use the Sobolev space
\[
W(0,T) = L^2(0,T;H^1(\om)) \cap H^1(0,T;L^2(\om)).
\]
\begin{theorem} \label{T:3.1}
 For all $g \in L^\infty(0,T)$, $U_g \in L^p(Q), \, p > \frac{5}{2},$ and $u_0 \in C(\bar \om \times [-T,0])$,
 the problem \eqref{E:3.2} has a unique solution $u \in W(0,T) \cap C(\bar{Q})$.
\end{theorem}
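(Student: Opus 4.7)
My plan is to recast \eqref{E:3.2} as a fixed-point equation and to exploit the Volterra structure of $K(g)$ via a Banach contraction argument. For any $v \in L^\infty(Q)$ I first solve the auxiliary semilinear parabolic problem
\begin{equation*}
\partial_t u - \Delta u + R(u) + \kappa\, u = \kappa\, K(g) v + \kappa\, U_g \quad\text{in } Q,
\end{equation*}
with the prescribed initial value $u(\cdot,0)=u_0(\cdot,0)$ and the homogeneous Neumann condition. Since $g \in L^\infty(0,T)$ and $v \in L^\infty(Q)$, one has $\|K(g) v\|_{L^\infty(Q)} \le T\, \|g\|_{L^\infty}\, \|v\|_{L^\infty(Q)}$, and $U_g$ is continuous on $\bar Q$, so the whole right-hand side lies in $L^\infty(Q) \hookrightarrow L^p(Q)$ for every $p>5/2$. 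The standard well-posedness theory for the Schl\"ogl equation (developed in the references on the Schl\"ogl and FitzHugh-Nagumo equations cited in the introduction) then yields a unique solution $u \in W(0,T) \cap C(\bar Q)$ and defines a solution map $\Phi \colon v \mapsto u$.

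Next I would establish an a priori $L^\infty$-bound on $u=\Phi(v)$ that depends only on the $L^\infty$-norm of $v$. This uses the cubic growth of $R$ with positive leading coefficient $\rho$: testing the equation with the truncations $(u-M)^+$ and $-(u+M)^-$, or equivalently comparing with suitable ODE super- and subsolutions, shows that as soon as $M$ is large enough for $R(M)+\kappa M$ to exceed the sup-norm of the right-hand side, the set $\{u>M\}$ is empty, and symmetrically from below. Because the right-hand side grows only linearly in $\|v\|_{L^\infty}$ while $R(M) \sim \rho\, M^3$, a sufficiently large ball $B_{M_\star}\subset L^\infty(Q)$ is mapped by $\Phi$ into itself. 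On $B_{M_\star}$ the nonlinearity obeys a uniform one-sided Lipschitz estimate $(R(u_1)-R(u_2))(u_1-u_2)\ge -L_R (u_1-u_2)^2$.

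To obtain contractivity I take $u_i=\Phi(v_i)$, $i=1,2$, subtract the equations, test with $u_1-u_2$, integrate over $\Omega$ and drop the nonnegative gradient term. Combined with the one-sided Lipschitz bound above and Gronwall's inequality, this yields
\begin{equation*}
\|u_1(\cdot,t)-u_2(\cdot,t)\|_{L^2(\Omega)}^2 \le C\int_0^t \bigl\|[K(g)(v_1-v_2)](\cdot,s)\bigr\|_{L^2(\Omega)}^2\, ds .
\end{equation*}
Since $K(g)$ is a Volterra convolution with bounded kernel, iteration of this inequality produces the familiar factorial factors $(CT)^n/n!$; equivalently, $\Phi$ is a strict contraction on $B_{M_\star}$ under the exponentially weighted norm $\|v\|_\lambda:=\sup_{t\in[0,T]} e^{-\lambda t}\|v(\cdot,t)\|_{L^2(\Omega)}$ for $\lambda$ sufficiently large. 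Banach's fixed-point theorem then provides a unique $u\in B_{M_\star}$ with $\Phi(u)=u$, which is the sought solution in $W(0,T)\cap C(\bar Q)$. Uniqueness in the full class $W(0,T)\cap C(\bar Q)$ follows from the same energy/Gronwall estimate applied to the difference of two arbitrary solutions, whose $L^\infty$-bounds are automatic from their $C(\bar Q)$-membership.

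The step I expect to be the main obstacle is precisely the uniform $L^\infty$-bound on $\Phi(v)$: the contraction requires the local Lipschitz constant of the cubic $R$, so an a priori sup-norm bound must be in hand \emph{before} the fixed-point argument can even be started. The truncation/comparison argument exploiting the positive leading coefficient of $R$ together with the only linear dependence of $K(g)v$ on $\|v\|_{L^\infty}$ is therefore the technical heart of the proof; once that bound is established, the Volterra nature of $K(g)$ makes the contraction step essentially routine.
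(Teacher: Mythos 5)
Your argument is correct and follows essentially the same route as the paper: the exponentially weighted norm $e^{-\lambda t}\|\cdot\|_{L^2(\Omega)}$ in your contraction step is exactly the device behind the paper's substitution $u = e^{\lambda t}v$, which makes the Volterra operator $K_\lambda(g)$ small for large $\lambda$, and both proofs then combine an a priori $L^\infty$ bound coming from the cubic growth of $R$ with a fixed-point argument in the resulting setting. The only difference is presentational: you make the sup-norm bound self-contained via comparison/truncation and invoke Banach's theorem directly, whereas the paper works with the transformed equation, states the coercivity estimate, and defers the fixed-point details to the cited reference of Casas, Ryll and Tr\"oltzsch.
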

\begin{proof} We use the same technique that was applied in \cite{casas_ryll_troeltzsch2014} to show the
existence and continuity of the solution to the FitzHugh-Nagumo system. Let us mention the main steps.
First, we apply a simple transformation that is well-known in the theory of
evolution equations. We set
 \[
  u = e^{\lambda t} v
 \]
 with some $\lambda >0$. This transforms the partial differential equation in \eqref{E:3.2} to an equation
 for the new unknown function $v$,
 \begin{equation} \label{E:3.3}
 v_t - \Delta v + e^{-\lambda t}R(e^{\lambda t} v) + (\lambda  + \kappa) v =  \kappa \, K_\lambda(g) v  +
 e^{-\lambda t} \kappa \, U_g,
 \end{equation}
 where the integral operator $K_\lambda(g)$ is defined by
 \[
(K_\lambda(g) v)(x,t) = \int_0^t e^{-\lambda(t-s) } g(t-s)v(x,s)\, ds.
 \]
 If $g \in L^\infty(0,T)$, then both operators $K(g)$ and $K_\lambda(g)$ are continuous linear operators in $L^p(Q)$, for all $p\ge 1$.
 Moreover, due to the factor $e^{-\lambda(t-s) }$, the norm of $K_\lambda(g): L^2(Q) \to L^2(Q) $
 tends to zero as $\lambda \to \infty$. We obtain
 \begin{equation} \label{E:3.4}
 \|K_\lambda(g)\|_{\mathcal{L}(L^2(Q))} \le \frac{c}{\sqrt{\lambda}} \|g\|_{L^\infty(0,T)}
 \end{equation}
 with some constant $c > 0$. To have this estimate, we assumed in \eqref{E:2.3} that $g$ is uniformly bounded by the
 constant $\beta$. If $\lambda$ is sufficiently large, then we have
 \[
 \int_Q [e^{-\lambda t}R(e^{\lambda t} v) + (\lambda  + \kappa) v - \kappa K_\lambda(g)v] \, v \, dxdt \ge
 \frac{\lambda}{2} \|v\|^2_{L^2(Q)} \quad \forall v \in L^2(Q),
 \]
 because the coercive term $(\lambda + \kappa)\, v$ in the left side is dominating the other terms, cf. \cite{casas_ryll_troeltzsch2014}.

 With this inequality, an a priori estimate can be derived in $L^2(Q)$ for any solution $v$ of the equation \eqref{E:3.2}.
 Now, we can proceed as in \cite{casas_ryll_troeltzsch2014}: A fixed-point principle is applied in $L^2(Q)$ to prove
 the existence and uniqueness of the solution $v$ that in turn implies the same for $u$.
 For the details, the reader is referred to \cite{casas_ryll_troeltzsch2014}, proof
 of Theorem 2.1. However, we mention one important idea: Thanks to \eqref{E:3.4}, the term $(\lambda + \kappa)$
 absorbes the non-monotone terms in the equation \eqref{E:3.3} so that, in estimations, equation \eqref{E:3.3} behaves
 like the parabolic equation
 \[
 v_t - \Delta v + \tilde R(v) = F
 \]
 with a monotone non-decreasing nonlinearity $\tilde R$ and given right-hand side $F \in L^p(Q)$, $p > 5/2$. This fact
 can be exploited to verify, for each $r > 0$,  the existence of a constant $C_r > 0$ with the following property:
 If $g \in L^\infty(Q)$ obeys $\|g\|_{ L^\infty(Q)} \le r$ and $u$ is the associated solution to \eqref{E:2.2}, then
 \begin{equation} \label{E:3.5}
 \|u\|_{L^\infty(Q)} \le C_r.
 \end{equation}
 \hfill $\Box$ \end{proof}

\section{Analysis of optimization problems for feedback controllers}
\subsection{Definition of two optimization problems}
\subsubsection*{General kernel as control}

Let a desired function $u_d \in L^\infty(Q)$ be given. In our later applications, $u_d$ models a desired spatio-temporal
pattern. Moreover, we fix a non-negative function $c_Q \in L^\infty(Q)$. This function is used for selecting a desired observation domain.
We consider the feedback equation \eqref{E:2.2} and want to find a kernel $g$ such that the associated
solution $u$ approximates $u_d$ as close as possible in the domain of observation. This goal is expressed by the following functional
$j: L^2(Q) \times L^\infty(0,T) \to \mathbb{R}$ that is
to be minimized,
\[
j(u,g) :=\frac{1}{2} \iint_Q c_Q(u-u_d)^2 \, dxdt + \frac{\nu}{2} \int_0^T g^2(t) \, dt.
\]
Here, $\nu \ge 0$ is a Tikhonov regularization parameter. The standard choice of $c_Q$ is $c_Q(x,t) = 1$ for all $(x,t) \in Q$.
Another selection will be applied for  periodic functions $u_d$: $c(x,t) = 1$ for all $(x,t)  \in Q$ with $t \ge T/2$ and $c(x,t) = 0$ 
for all $(x,t) \in Q$ with $t <  T/2$.

By Theorem \ref{T:3.1}, to each $g \in L^\infty(0,T)$ there exists a unique associated state function $u$
that will be denoted by $u_g$. Then $j$ does only depend on $g$ and we obtain the reduced objective functional $J$,
\[
 J: g \mapsto  j(u_g,g).
\]
Therefore, our general optimization problem can be formulated as follows:
\[
\tag{PG}  \min_{g \in C} J(g):= \frac{1}{2} \iint_Q c_Q(u_g-u_d)^2 \, dxdt + \frac{\nu}{2} \int_0^T g^2(t) \, dt,
\]
where $C \subset L^\infty(0,T)$ is the convex and closed set defined by
\[
C:= \left\{ g \in L^\infty(0,T): \ 0 \le g(t)  \le  \beta \; \mbox{ a.e. in } [0,T] \mbox{ and }  \int_0^T g(t) \, dt =1.\right\}
\]
Notice that $C$ is a weakly compact subset of $L^2(0,T)$. The restrictions on $g$ are motivated by the
background in mathematical physics. In particular, the restriction on the integral of $g$ guarantees that
\[
\int_0^T g(\tau)u(x,t-\tau)\, d\tau - u(x,t) = 0,
\]
if $u(x,t-\tau)= u(x,t)$ in Q.
By the definition of $u_g$, the optimization is  subject to the state equation \eqref{E:2.2}.

\subsubsection*{Special kernel as control}
The other optimization problem we are interested in, uses the particular form \eqref{E:2.5} of the kernel $g$,
\[
 \min_{0 \leq t_1 < t_2 \leq T} J_S(\kappa,t_1,t_2):= \frac{1}{2} \iint_Q c_Q(u_{(\kappa,t_1,t_2)}-u_d)^2 \, dxdt + \frac{\nu}{2} (t_1^2+t_2^2 + \kappa^2),
\]
where $u_{(\kappa,t_1,t_2)}$ is the solution of \eqref{E:2.6} for a given triplet $(\kappa,t_1,t_2)$.
This problem might fail to have an optimal solution, because the set of admissible triplets
$(\kappa,t_1,t_2)$ is not closed. Notice that we need $t_1<t_2$ in
\eqref{E:2.6}. Therefore, we fix $\delta > 0$ and define the slightly changed admissible set
\[
C_\delta:= \left\{ (\kappa,t_1,t_2) \in \R^3: \ 0 \leq t_1 < t_2 \leq T, \, t_2 -t_1 \ge \delta, \  \kappa \in \R \right\}
\]
that is compact. In this way, we obtain the special finite-dimensional optimization problem for step functions $g$,
\[
\tag{PS} \min_{(\kappa,t_1,t_2) \in C_\delta} J_S(\kappa,t_1,t_2):= \frac{1}{2} \iint_Q c_Q(u_{(\kappa,t_1,t_2)}-u_d)^2 \, dxdt + \frac{\nu}{2} (t_1^2+t_2^2+ \kappa^2).
\]

\subsection{Discussion of (PG)}
\subsubsection*{The control-to-state mapping $G$}

Next, we discuss the differentiability of the  control-to-state mappings  $g \mapsto u_g$ and $(\kappa,t_1,t_2)  \mapsto
u(\kappa,t_1,t_2)$.  First, we consider the case of the general kernel $g$.
The analysis for the particular kernel \eqref{E:2.5} is fairly analogous but cannot deduced as a particular case
of (PG). We will briefly sketch it in a separate section.

By Theorem \ref{T:3.1}, we know that
the mapping $G: g \mapsto u_g$ is well defined from $L^\infty(0,T)$ to $C(\bar Q)$.
Now we discuss the differentiability of $G$.
To slightly simplify the notation, we introduce an operator
$\mathcal{K}: L^\infty(0,T)\times C(\bar Q) \to C(\bar Q)$ by
\[
\mathcal{K}(g,u) = K(g)u,
\]
where $K(g)$ was introduced in \eqref{E:3.1}; notice that $\mathcal{K}$ is bilinear.
Let us first show the differentiability for  $\mathcal{K}$.

We fix $g \in L^\infty(0,T), \, u \in C(\bar Q)$, and select varying increments $h \in L^\infty(0,T)$, $v \in C(\bar Q)$.
Then we have
\[
\begin{split}
&\mathcal{K}(g+h,u+v)= \int_0^T [g(\tau)+h(\tau)] [u(x,t-\tau)+v(x,t-\tau)] \, d\tau \\
&\quad =   \int_0^t g(\tau)u(x,t-\tau) \, d\tau +
\underbrace{\int_0^t h(\tau)u(x,t-\tau) \, d\tau +   \int_0^t g(\tau)v(x,t-\tau)\, d\tau}_{A(g,u)(h,v)}\\
&\quad \qquad + \underbrace{\int_0^t h(\tau)v(x,t-\tau)\, d\tau}_ {R(h,v)}
 = {\mathcal{K}(g,u)} + A(g,u)(h,v)+ {R(h,v)},
\end{split}
\]
where $A(g,u): L^\infty(0,T) \times C(\bar Q) \to C(\bar Q)$ is a linear continuous operator and
$R: L^\infty(0,T) \times C(\bar Q) \to C(\bar Q)$ is a remainder term. It is easy to confirm that
\[
\frac{\|R(h,v) \|_{C(\bar Q)}}{\|(h,v) \|_{L^\infty(0,T)\times C(\bar Q)}}  \to 0, \quad  \mbox{ if } \|(h,v) \|_{L^\infty(0,T)\times C(\bar Q)} \to 0.
\]
Therefore, $\mathcal{K}$ is Fr\'{e}chet-differentiable. As a continuous bilinear form, $\mathcal{K}$ is also of
class $C^2$.

Now, we investigate the control-to-state mapping $G: L^\infty(0,T) \to C(\bar Q)$ defined by
$
 G: g \mapsto u_g,
$
where the state function $u_g$ is defined as the unique solution to
\be \label{E:4.1}
\begin{array}{rcll}
 \partial_tu - \Delta u + R(u) +\kappa \, u &=&  \kappa\,\mathcal{K}(g,u)+ \kappa\,U_g & \text{ in } Q\\
  \partial_n u &=& 0&\text{ in } \Sigma\\
 u(0)&=& u_0(0)&\text{ in }\Omega .
 \end{array}
\ee
In what follows, the initial function $u_0$ will be kept fixed
and is therefore not mentioned. Of course, $U_g,\, G$ and some of the operators below depend on $u_0$, but we will
not explicitely mention this dependence.
To discuss $G$, we need known properties
of the following auxiliary mapping $\mathcal{G}: v \mapsto u$, where
\[
\begin{array}{rcll}
 \partial_tu - \Delta u + R(u) +\kappa \, u &=& v & \text{ in } Q\\
 \partial_n u &=& 0& \text{ in } \Sigma\\
 u(0)&=& u_0(0)& \text{ in } \Omega.
\end{array}
\]
This mapping $\mathcal{G}$ is of class $\mathcal{C}^2$ from $L^p(Q)$ to $W(0,T) \cap \mathcal{C}(\bar{Q})$,
if $p>\frac{5}{2}$, in particular from $L^\infty(Q)$ to $L^\infty(Q)$, cf. \cite{casas_ryll_troeltzsch2014} or, for monotone
$R$, \cite{cas93}, \cite{rayzid99}, \cite{tro10book}.
\smallskip

Now (consider $v:= \kappa\,( \mathcal{K}(g,u)+U_g)$ as given and keep the initial function $u_0$ fixed),
$u$ solves \eqref{E:4.1} if and only if  $u=\mathcal{G}( \kappa\,\mathcal{K}(g,u)+\kappa\, U_g)$, i.e.
\be\label{E:4.2}
 u-\mathcal{G}( \kappa\,\mathcal{K}(g,u)+\kappa\, U_g)=0.
\ee
We introduce a  new mapping
$\mathcal{F}\,:\,L^\infty(Q)\times L^\infty(0,T)  \to L^\infty(Q)$ defined by
\[
 \mathcal{F}(u,g):=u-\mathcal{G}( \kappa\,\mathcal{K}(g,u)+\kappa\, U_g).
\]
Then, \eqref{E:4.2} is equivalent to the equation
\be \label{E:4.3}
 \mathcal{F}(u,g)=0.
\ee
We have proved above  that the mapping
 $(g,u) \mapsto \mathcal{K}(g,u)$ is of  class $\mathcal{C}^2$ from $L^\infty(0,T) \times L^\infty(Q)$ to $L^\infty(Q)$.
 Obviously, also the linear mapping $g \mapsto U_g$ is of class $\mathcal{C}^2$ from $L^\infty(0,T)$ to $L^\infty(Q)$.
 By the chain rule, also $\mathcal{F}$  is  $\mathcal{C}^2$  from $L^\infty(Q)\times L^\infty(0,T)  \to L^\infty(Q)$
 and the mappings $\partial_g \mathcal{F}(\bar{u},\bar{g})$,
  $\partial_u \mathcal{F}(\bar{u},\bar{g})$ are continuous in the associated pairs of spaces.

To use the implicit function theorem, we prove that
$\partial_u \mathcal{F}(\bar{u},\bar{g})$ is continuously invertible at any fixed pair $(\bar{u},\bar{g})$.  Therefore, we
consider the equation
\be\label{E:4.4}
 \partial_u \mathcal{F}(\bar{u}, \bar{g})v=z
\ee
with given right-hand side $z \in L^\infty(Q)$ and  show the existence of a unique solution $v \in L^\infty(Q)$. The equation is equivalent with
\be
 v- \mathcal{G}'(\underbrace{\kappa \, \mathcal{K}(\bar g,\bar u)+\kappa \, U_g}_{\bar{p}}) \kappa\, \mathcal{K}(g,v)=   z.
 \ee
Writing for convenience $\bar p = \kappa\,\mathcal{K}(\bar g,\bar u)+ \kappa\,U_g$, we obtain the simpler form
 \[
 v- \mathcal{G}'(\bar{p})\kappa\,K(\bar{g})v= z.
 \]
A function $z \in \liq$ does not in general belong to $W(0,T)$. To overcome this difficulty,  we set $w:=v-z$ and transform
the equation to
\be \label{E:4.6}
 w=\mathcal{G}'(\bar{p})\underbrace{\kappa\,K(\bar{g})(w+z)}_{q}=
 \mathcal{G}'(\bar{p})q.
 \ee
 where $q := \kappa\,K(\bar{g})(w+z)$.
As the next result shows, $w$ is the solution of a parabolic PDE, hence $w \in W(0,T)$.
\begin{lem} \label{L5.1}
 Let $q \in L^p(Q)$ with $p > 5/2$ be given. Then we have $y=\mathcal{G}'(\bar{p})q$ if and only if $y$ solves
\[
\begin{array}{rcll}
 \partial_t y - \Delta y +R'(\bar{u}) y+ \kappa\,y &=&q& \text{ in } Q\\
 \partial_n y &=& 0& \text{ in } \Sigma\\
 y(0)&=&0& \text{ in } \Omega,
 \end{array}
 \]
 where $\bar{u}$ is the solution  associated with  $\bar{p}$, i.e.
\[
\begin{array}{rcll}
\partial_t \bar{u} - \Delta \bar{u} +R(\bar{u})+ \kappa\,\bar{u} &=&\bar{p}& \text{ in }  Q\\
  \partial_n \bar u &=& 0& \text{ in }\Sigma \\
 \bar{u}(0)&=&u_0& \text{ in } \Omega.
 \end{array}
 \]
\end{lem}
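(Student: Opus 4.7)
My plan is to identify the Fréchet derivative of $\mathcal{G}$ by differentiating its defining nonlinear parabolic equation. Since $\mathcal{G}:L^p(Q)\to W(0,T)\cap C(\bar Q)$ is already known to be of class $\mathcal{C}^2$ for $p>5/2$ (cf.\ \cite{casas_ryll_troeltzsch2014,cas93,rayzid99,tro10book}), the derivative $\mathcal{G}'(\bar p)$ exists as a continuous linear operator from $L^p(Q)$ into $W(0,T)\cap C(\bar Q)$; the task reduces to characterizing $y=\mathcal{G}'(\bar p)q$ as the unique solution of the linearized Neumann problem stated in the lemma.

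I would organize the argument via the implicit function theorem. Introduce $F(u,v):=\partial_t u-\Delta u+R(u)+\kappa u-v$, supplemented by the homogeneous Neumann condition $\partial_n u=0$ on $\Sigma$ and the initial condition $u(\cdot,0)=u_0(\cdot,0)$ in $\Omega$, so that $u=\mathcal{G}(v)$ is equivalent to $F(u,v)=0$. A direct computation yields $\partial_u F(\bar u,\bar p)\,y=\partial_t y-\Delta y+R'(\bar u)\,y+\kappa\,y$ (with zero Neumann and zero initial data for the perturbation) and $\partial_v F(\bar u,\bar p)\,q=-q$. The implicit function theorem then gives $\mathcal{G}'(\bar p)=-[\partial_u F(\bar u,\bar p)]^{-1}\partial_v F(\bar u,\bar p)$, so $y:=\mathcal{G}'(\bar p)q$ is characterized by $\partial_u F(\bar u,\bar p)y=q$, $y(\cdot,0)=0$, $\partial_n y=0$, which is precisely the system displayed. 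The converse implication is immediate from the uniqueness of the linear solution.

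An equivalent direct route is the difference-quotient argument: set $u_s:=\mathcal{G}(\bar p+sq)$, expand $R(u_s)=R(\bar u)+R'(\bar u)(u_s-\bar u)+O(\|u_s-\bar u\|_{C(\bar Q)}^2)$, subtract the PDEs for $u_s$ and $\bar u$, divide by $s$, and pass to the limit using $C^1$-regularity of $\mathcal{G}$. Since $\bar u$ and $u_s$ share the initial datum $u_0(\cdot,0)$, one has $(u_s-\bar u)(\cdot,0)/s\equiv 0$, so the limit $y$ automatically satisfies $y(\cdot,0)=0$; similarly the homogeneous Neumann condition is preserved.

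The only nontrivial technical ingredient is the invertibility of $\partial_u F(\bar u,\bar p)$, i.e.\ well-posedness in $W(0,T)$ of the linearized problem with right-hand side $q\in L^p(Q)$, $p>5/2$, and the bounded coefficient $R'(\bar u)+\kappa$ (bounded thanks to $\bar u\in L^\infty(Q)$ via the a priori estimate \eqref{E:3.5}). This is the point I would expect to cite rather than prove: it is a standard application of maximal parabolic regularity for linear equations with $L^\infty$ coefficients, together with the embedding $W(0,T)\hookrightarrow C(\bar Q)$ valid in dimension $N\le 3$ for $p>5/2$, both of which are already exploited in the cited references to obtain $\mathcal{G}\in \mathcal{C}^2$.
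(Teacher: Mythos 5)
Your argument is correct in outline, but note that the paper does not actually prove this lemma: it simply refers to \cite{casas_ryll_troeltzsch2014} and remarks that for monotone $R$ the statement is classical (\cite{cas93}, \cite{rayzid98}, Thm.~5.9 of \cite{tro10book}). What you have written is essentially the standard argument contained in those references -- apply the implicit function theorem to $F(u,v)=\partial_t u-\Delta u+R(u)+\kappa u-v$ (with the Neumann and initial conditions built into the domain of $F$), compute $\partial_u F(\bar u,\bar p)y=\partial_t y-\Delta y+R'(\bar u)y+\kappa y$ and $\partial_v F=-\mathrm{id}$, and read off $\mathcal{G}'(\bar p)q=[\partial_u F(\bar u,\bar p)]^{-1}q$; the difference-quotient variant is an equivalent packaging. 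Two small caveats. First, to make the implicit-function setup rigorous you must fix the codomain of $F$ so that $\partial_t u-\Delta u$ makes sense for $u\in W(0,T)\cap C(\bar Q)$ and so that $\partial_u F(\bar u,\bar p)$ is an isomorphism onto that codomain; the cleaner formulation, and the one the paper itself uses for the full feedback equation, is $F(u,v)=u-\mathcal{G}_0(v-R(u)-\kappa u+\dots)$ with $\mathcal{G}_0$ a linear solution operator, which avoids this issue. Second, your justification of continuity of $y$ is inaccurate: $W(0,T)$ does \emph{not} embed into $C(\bar Q)$ for $N\le 3$ (it embeds into $C([0,T];L^2(\Omega))$). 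The continuity of solutions for right-hand sides in $L^p(Q)$ with $p>5/2=N/2+1$ comes from parabolic $L^\infty$/H\"older regularity theory (as in \cite{casas_ryll_troeltzsch2014} or \cite{DiBenedetto1986}), not from a Sobolev embedding of $W(0,T)$. Since you only cite this ingredient rather than use it structurally, the slip does not invalidate the proof, but the stated reason should be corrected.
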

We refer to \cite{casas_ryll_troeltzsch2014}. For monotone non-decreasing functions $R$, this result  is well  known  in the theory of semilinear parabolic control problems, see e.g. \cite{cas93}, \cite{rayzid98}, or \cite[Thm. 5.9]{tro10book}.
By Lemma \ref{L5.1},  the solution  $w$ of \eqref{E:4.6}  is the unique solution of the linear PDE
\be \label{E:4.7}
\begin{split}
& (\partial_t w- \Delta w +R'(\bar{u})w+ \kappa\,w)(x,t)= q(x,t)\\
 &  \qquad \qquad =  \kappa\int_0^t \bar{g}(\tau) w(x,t-\tau) \, d\tau + \kappa\int_0^t \bar{g}(\tau) z(x,t-\tau) \, d\tau
 \end{split}
 \ee
 subject to $w(0) = 0$ and homogeneous Neumann boundary conditions.
 By the same methods as above we find that,  for all $z \in L^\infty(Q)$, equation $(\ref{E:4.7})$ has
 a unique solution $w \in W(0,T) \cap L^\infty(Q)$. \smallskip

After transforming back by $v = w + z$, we have found that for all $z \in L^\infty(Q)$, \eqref{E:4.4}  has  a unique solution $v \in L^\infty(Q)$ given by $v = w + z$. Therefore, the inverse operator $\partial_u \mathcal{F}(\bar{u},\bar{g})^{-1}$ exists.
 The  continuity of this inverse mapping follows from a
 result of  \cite{casas_ryll_troeltzsch2014} that the mapping $z \mapsto w$ defined by \eqref{E:4.7}
 is continuous in $L^\infty(Q)$. 

Next, we consider the operator $\partial_g \mathcal{F}$.
It exists by the chain rule and admits the form
\[
 \partial_g \mathcal{F}(\bar{u},\bar{g})h=\mathcal{G}'(\kappa\,(K(\bar{g})\bar{u}+U_g))\kappa\,(K(h)\bar{u}+ \partial_gU_g h).
\]
Setting again $\bar p = \kappa\,(K(\bar{g})\bar{u}+U_g)$ and $q = \kappa\,(K(h)\bar{u}+ \partial_gU_g h)$, 
we see that 
\[
\partial_g \mathcal{F}(\bar{u},\bar{g})h=\eta,
\]
where, by Lemma \ref{L5.1},  $\eta$ solves the equation 
\[
\partial_t \eta - \Delta \eta + R'(\bar{u})\eta+\kappa\,\eta= q = \kappa\,K(h)\bar{u}+ \kappa\, \partial_gU_g h
\]
subject to homogeneous initial and boundary conditions.
 Therefore, $\eta$ is the unique solution to
\begin{eqnarray*}
( \partial_t \eta - \Delta \eta + R'(\bar{u})\eta+\kappa\,\eta)(x,t)&=& 
\kappa\, \int_0^t h(\tau)\bar{u}(x,t-\tau)\, d\tau\\
&&\quad + \kappa\, \int_t^T h(\tau)u_0(x,t-\tau)\, d\tau\\
 \eta(x,0)&=&0\\
  \partial_n \eta &=& 0.
\end{eqnarray*}
By $\bar u(x,t) = u_0(x,t)$ for $-T \le t \le 0$, we can re-write this as
\begin{eqnarray*}
( \partial_t \eta - \Delta \eta + R'(\bar{u})\eta+\kappa\,\eta)(x,t)&=& \kappa\, \int_0^T h(\tau)\bar{u}(x,t-\tau)\, d\tau\\
 \partial_n \eta &=& 0\\
 \eta(x,0)&=&0.
\end{eqnarray*}
Again, the mapping  $h \mapsto w$ is continuous from
$L^\infty(0,T)$ to $W(0,T) \cap \mathcal{C}(\bar{Q})$.
\medskip

Collecting the last results, we have the following theorem:
\begin{theorem}[Differentiability of $G$] The control-to-state mapping $G: g \mapsto u_g$ associated with equation \eqref{E:3.2} is
of class $C^2$. The first order derivative $z:=G'(g)h$ is obtained as the unique solution to
\be\label{E:4.8}
\begin{array}{rcll}
(\partial_t z - \Delta z+R'(u_g)z+\kappa \, z)(x,t)&=& \ds \kappa \int_0^T h(\tau) u_g(x,t- \tau) \, d \tau\\
&&\ds \qquad   +\kappa \int_0^t g\, (\tau) z(x,t- \tau) \, d \tau&\text{ in }Q\\[1ex]
\partial_n z &=& 0&\text{ in } \Sigma\\
z(\cdot,t)&=& 0, \; -T \leq t \leq 0& \text{ in } \Omega.
\end{array}
\ee
\end{theorem}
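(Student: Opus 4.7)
The plan is a direct application of the implicit function theorem to the equation $\mathcal{F}(u,g) = 0$, combined with the PDE characterization of $\mathcal{G}'$ supplied by Lemma \ref{L5.1}. All the hard analytic work has already been prepared in the text: $\mathcal{F}: L^\infty(Q) \times L^\infty(0,T) \to L^\infty(Q)$ was shown to be $C^2$ (from the $C^2$-regularity of $\mathcal{G}$, the bilinearity of $\mathcal{K}$, and the linearity of $g \mapsto U_g$), and $\partial_u \mathcal{F}(\bar u, \bar g)$ was shown to be a continuous bijection of $L^\infty(Q)$ by reduction to the uniquely solvable linear parabolic delay equation \eqref{E:4.7}.

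First, I would fix an arbitrary $\bar g \in L^\infty(0,T)$, set $\bar u := u_{\bar g}$ (which exists uniquely by Theorem \ref{T:3.1}, so that $\mathcal{F}(\bar u, \bar g) = 0$), and apply the IFT at $(\bar u, \bar g)$. This yields a $C^2$ map $\tilde G$ defined near $\bar g$ with $\mathcal{F}(\tilde G(g), g) = 0$; by the uniqueness part of Theorem \ref{T:3.1}, $\tilde G$ must coincide locally with $G$. Since $\bar g$ is arbitrary, $G \in C^2(L^\infty(0,T), L^\infty(Q))$, and via the regularity of $\mathcal{G}$ one may upgrade the target space to $W(0,T) \cap C(\bar Q)$.

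Second, I would identify $z := G'(\bar g)h$ by differentiating $\mathcal{F}(G(g), g) = 0$, which gives $\partial_u \mathcal{F}(\bar u, \bar g)\, z = -\partial_g \mathcal{F}(\bar u, \bar g)\, h$. Rearranging into the form
\[
z \;=\; \mathcal{G}'(\bar p)\,\kappa\bigl(K(\bar g)z + K(h)\bar u + \partial_g U_g\, h\bigr),
\]
Lemma \ref{L5.1} translates this abstract identity into the statement that $z$ solves the linear parabolic equation
\[
\partial_t z - \Delta z + R'(\bar u)z + \kappa z \;=\; \kappa\, K(\bar g)z + \kappa\, K(h)\bar u + \kappa\, \partial_g U_g\, h,
\]
with homogeneous Neumann boundary conditions and $z(\cdot,0)=0$. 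The final cosmetic step is to merge the two integrals $\int_0^t h(\tau)\bar u(x,t-\tau)\,d\tau$ (from $K(h)\bar u$) and $\int_t^T h(\tau)u_0(x,t-\tau)\,d\tau$ (from $\partial_g U_g\, h$) into $\int_0^T h(\tau) u_g(x,t-\tau)\,d\tau$ by invoking $\bar u(\cdot,s)=u_0(\cdot,s)$ on $[-T,0]$, and to extend $z$ by zero on $[-T,0]$ so that the convolution with $\bar g$ on the right-hand side is well-defined. This reproduces exactly the system \eqref{E:4.8}.

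The only non-routine point is making sure that the linear delay PDE obtained from the implicit differentiation is uniquely solvable in $W(0,T)\cap C(\bar Q)$; however this follows from the same fixed-point and absorption argument used for \eqref{E:4.7}, so no new estimate is needed. The rest is bookkeeping of the integral terms.
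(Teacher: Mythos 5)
Your proposal is correct and follows essentially the same route as the paper: the implicit function theorem applied to $\mathcal{F}(u,g)=0$ using the previously established invertibility of $\partial_u\mathcal{F}$, followed by implicit differentiation and the translation of the resulting operator identity into the linear delay PDE via Lemma \ref{L5.1} and the merging of the two integral terms using $\bar u = u_0$ on $[-T,0]$. If anything, you are slightly more careful than the paper in identifying the local IFT branch with the global map $G$ through the uniqueness assertion of Theorem \ref{T:3.1}.
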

\begin{proof}
We already know by Theorem \ref{T:3.1} that, for all $g\in L^\infty(0,T)$, there exists  a unique solution $u=G(g) \in W(0,T)
\cap \mathcal{C}(\bar{Q})$ solving the equation
\[
 \mathcal{F}(u,g)=0.
\]
We discussed above that the assumptions of the implicit function theorem are satisfied.
Now this theorem yields that  the mapping $g \mapsto G(g)$
is of class $\mathcal{C}^2$.

The derivative $G'(g)h$ is obtained by implicit differentiation.
By definition of $G(g)$, we have
\be \label{E:4.9}
\begin{array}{rcl}
(\partial_t G(g) - \Delta G(g)+R(G(g))+\kappa \, G(g))(x,t)&=& \ds \kappa \int_0^t g(\tau) \, G(g)(x,t- \tau) \, d \tau  \\[2ex]
&& \ \ + \ds \kappa \, \int_t^T g(\tau) \, u_0(x,t- \tau) \, d \tau\\[1ex]
\partial_n G(g) &=& 0\\[1ex]
G(g)(\cdot,t)&=& u_0(\cdot,t), \; -T \leq t \leq 0.
\end{array}
\ee
Implicit differentiation yields  that $z:= G'(g)h$ is the unique solution of \eqref{E:4.8}.
Notice that
\[
\int_0^t g(\tau) \, G(g)(x,t- \tau) \, d \tau +  \int_t^T g(\tau) \, u_0(x,t- \tau) \, d \tau = \int_0^T g(\tau) \, G(g)(x,t- \tau) \, d \tau.
\]
\hfill $\Box$ \end{proof}

\subsection{Existence of an optimal kernel}

\begin{theorem}
 For all  $\nu \ge 0$, {\rm (PG)}  has at least one optimal solution $\bar{g}$.
\end{theorem}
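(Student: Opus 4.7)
My plan is to apply the direct method of the calculus of variations. Since $J(g)\ge 0$ on $C$, the infimum $j^*:=\inf_{g\in C} J(g)$ is finite, and I select a minimizing sequence $\{g_n\}\subset C$ with $J(g_n)\to j^*$. Because $C$ is bounded in $L^\infty(0,T)$ by $\beta$, it is bounded, convex and closed in $L^2(0,T)$, hence weakly sequentially compact there; a subsequence (still denoted $\{g_n\}$) therefore satisfies $g_n\rightharpoonup \bar g$ in $L^2(0,T)$ with $\bar g\in C$. In fact the uniform $L^\infty$-bound also yields weak-$*$ convergence in $L^\infty(0,T)$ of a further subsequence, which I will use when convenient.

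Next I need a compactness statement for the associated states $u_n:=u_{g_n}=G(g_n)$. By the a priori bound \eqref{E:3.5} applied with $r=\beta$, the sequence $\{u_n\}$ is uniformly bounded in $L^\infty(Q)$, whence $R(u_n)$, $K(g_n)u_n$ and $U_{g_n}$ are uniformly bounded in $L^\infty(Q)$. Viewing \eqref{E:3.2} as a linear parabolic equation for $u_n$ with right-hand side bounded in $L^2(Q)$, standard parabolic energy estimates produce a uniform bound of $\{u_n\}$ in $W(0,T)$. The Aubin--Lions compactness theorem then yields a further subsequence with $u_n\rightharpoonup \bar u$ in $W(0,T)$ and $u_n\to \bar u$ strongly in $L^2(Q)$, as well as almost everywhere on $Q$.

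The crucial step is to pass to the limit in \eqref{E:3.2} and identify $\bar u=u_{\bar g}$. The linear terms $\partial_t u_n$, $\Delta u_n$ and $\kappa u_n$ pass by weak convergence. The reaction term satisfies $R(u_n)\to R(\bar u)$ in $L^2(Q)$ because $R$ is a fixed cubic polynomial and $u_n\to \bar u$ strongly in $L^2(Q)$ while remaining uniformly bounded in $L^\infty(Q)$. For the nonlocal bilinear term I split
\[
K(g_n)u_n - K(\bar g)\bar u = K(g_n)(u_n-\bar u) + K(g_n-\bar g)\bar u.
\]
The first summand tends to $0$ strongly in $L^2(Q)$, because $\|K(g_n)\|_{\mathcal{L}(L^2(Q))}\le c\|g_n\|_{L^\infty(0,T)}\le c\beta$ and $u_n\to\bar u$ strongly. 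The second summand converges to $0$ weakly in $L^2(Q)$: for the fixed function $\bar u\in L^2(Q)$, the map $g\mapsto K(g)\bar u$ is linear and continuous from $L^2(0,T)$ into $L^2(Q)$, hence weakly continuous. An analogous argument shows $U_{g_n}\rightharpoonup U_{\bar g}$ in $L^2(Q)$. Identifying the limiting equation, I obtain $\bar u=u_{\bar g}$.

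The conclusion follows from weak lower semicontinuity. The tracking term is convex and continuous on $L^2(Q)$, and $u_n-u_d\to \bar u-u_d$ strongly in $L^2(Q)$, so it converges along the sequence; the Tikhonov term is the squared norm on $L^2(0,T)$ and therefore weakly lower semicontinuous. Hence
\[
J(\bar g)\le \liminf_{n\to\infty} J(g_n)=j^*,
\]
and since $\bar g\in C$, the bound is attained, so $\bar g$ is optimal. I expect the main technical obstacle to be the passage to the limit in the nonlocal bilinear term $K(g_n)u_n$, where both factors vary; this is resolved by combining the uniform $L^\infty$-bound on $g_n$ with the Aubin--Lions compactness of $\{u_n\}$ in $L^2(Q)$, so that one factor can be treated in strong topology and the other in weak topology.
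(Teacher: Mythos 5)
Your proof is correct, and its skeleton coincides with the paper's: minimizing sequence, weak $L^2$-compactness of $C$, the uniform bound \eqref{E:3.5} on the states, strong convergence of the states combined with weak convergence of the kernels to pass to the limit in the bilinear term, and weak lower semicontinuity at the end. Two technical choices differ, both legitimately. First, for compactness of the states the paper splits off the part driven by the initial datum and invokes a H\"older estimate of DiBenedetto together with Arzel\`a--Ascoli, obtaining strong convergence in $L^\infty(Q)$; you instead derive a uniform $W(0,T)$-bound and apply Aubin--Lions, obtaining strong convergence only in $L^2(Q)$. Your route is more elementary and avoids the regularity machinery, and the weaker convergence is still sufficient everywhere it is used (note only that with the paper's strong definition $W(0,T)=L^2(0,T;H^1)\cap H^1(0,T;L^2)$ the uniform bound on $\partial_t u_n$ in $L^2(Q)$ requires a word of justification, e.g.\ via Theorem \ref{T:3.1} or by working in the weaker space $\{u\in L^2(0,T;H^1):\ u'\in L^2(0,T;(H^1)^*)\}$, for which the energy estimate and Aubin--Lions suffice). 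Second, for the limit of the nonlocal term the paper tests $K(g_n)u_n$ against an arbitrary $\varphi\in L^2(Q)$ and rearranges by Fubini so that the weak convergence of $g_n$ acts on a strongly convergent sequence of inner integrals; your decomposition $K(g_n)u_n-K(\bar g)\bar u=K(g_n)(u_n-\bar u)+K(g_n-\bar g)\bar u$, using the uniform operator bound for the first piece and weak continuity of the linear map $g\mapsto K(g)\bar u$ for the second, is a cleaner packaging of the same ``strong times weak'' idea and also handles $U_{g_n}\rightharpoonup U_{\bar g}$ uniformly (a point the paper glosses over by writing $U_g$ as if it were independent of $n$). You also spell out the final lower-semicontinuity step, which the paper leaves implicit.
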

\begin{proof}
Let $(g_n)$ with $g_n \in C$ for all $n \in \mathbb{N}$ be a minimizing sequence. Since $C$ is
bounded, convex, and closed in $L^\infty(0,T)$, we can assume
without limitation of generality that $g_n$ converges weakly in  $L^2(0,T)$ to $\bar g$, i.e. 
$g_n \rightharpoonup \bar{g}$, $n \to \infty$. The associated sequence
of states $u_n$ obeys the equations
\begin{equation} \label{E:4.10}
\partial_t u_n - \Delta u_n + \kappa \, u_n = d_n:=
-\kappa \, R(u_n) + \kappa \, K(g_n)u_n + \kappa \, U_g.
\end{equation}
By the principle of superposition, we split the functions $u_n$ as $u_n = \hat{u} + \tilde u_n$, where $\hat{u}$ is the solution
of \eqref{E:4.10} with right-hand side $d_n := 0$ and initial value $\hat{u}(0) = u_0(0)$, while $\tilde u_n$ is the solution to the
right-hand side $d_n$ defined above and zero initial value.
In view of \eqref{E:3.5}, all state functions $u_n$, hence also the functions $\tilde u_n$, are uniformly bounded in $L^\infty(Q)$.
Thanks to  \cite[Thm. 4]{DiBenedetto1986}, the sequence $(\tilde u_n)$ is bounded in some H\"older space $C^{0,\lambda}(Q)$. By the Arzela-Ascoli
theorem, we can assume (selecting a subsequence, if necessary) that $\tilde u_n$ converges strongly in $L^\infty(Q)$.
Adding to $\tilde u_n$ the fixed function $\hat u$, we have that $(u_n)$ converges strongly to some  $\bar u$ in $L^\infty(Q)$.

The boundedness of $(u_n)$ also induces the boundedness of the sequence $(d_n)$ in $L^\infty(Q)$, in particular
in $L^2(Q)$. Therefore, we can assume that $d_n$ converges weakly in $L^2(Q)$ to $ \bar d$, $n \to \infty$. Since $(u_n)$ is the sequence of
solutions to the ''linear'' equation \eqref{E:4.10} with right-hand side $d_n$, the weak convergence of $(d_n)$ induces the
weak convergence of $u_n \rightharpoonup \bar u$ in $W(0,T)$, where $\bar u$ solves  \eqref{E:4.10} with right-hand side
$\bar d$.

Finally, we show that
\[
\bar d(t) = -\kappa R(\bar u(t)) + \kappa \, (K(\bar g)\bar u)(t) + \kappa \, U_g(t)
\]
so that $\bar u$ is the state associated with $\bar g$. Obviously, it suffices to prove
that $K(g_n)u_n $ converges weakly to $K(\bar g)\bar u$ in $L^2(Q)$. To this aim, let an arbitrary $\varphi \in L^2(Q)$ be given.
Then we have
\begin{equation} \label{E:4.11}
\begin{split}
&\iint_Q  \varphi(x,t)\left( \int_0^t g_n(\tau)u_n(x,t-\tau)\, d\tau \right)dx dt \\
&\quad = \int_0^T g_n(\tau) \left(\int_\tau^T \int_\om \varphi(x,t)u_n(x,t-\tau) \, dt dx\right) d\tau.
\end{split}
\end{equation}
Clearly, the strong convergence of $(u_n)$ in $L^\infty(Q)$ yields
\[
\int_\tau^T \int_\om \varphi(x,t)u_n(x,t-\cdot) \, dt dx \to \int_\tau^T \int_\om \varphi(x,t)\bar u(x,t-\cdot) \, dt dx
\]
in $L^2(0,T)$. Along with the weak convergence of $g_n$, this implies
\[
\begin{split}
& \lim_{n \to \infty} \int_0^T g_n(\tau) \int_\tau^T \int_\om \varphi(x,t)u_n(x,t-\tau) \, dt dx d\tau \\
&\qquad =  \int_0^T \bar g(\tau)
 \int_\tau^T \int_\om \varphi(x,t)\bar u(x,t-\tau) \, dt dx d\tau .
 \end{split}
\]
In view of \eqref{E:4.11}, we finally arrive at
\[
\iint_Q \varphi(x,t) \int_0^t g_n(\tau)u_n(x,t-\tau)\, d\tau dx dt \to
\iint_Q\varphi(x,t) \int_0^t \bar g (\tau)\bar u(x,t-\tau)\, d\tau dx dt
\]
as $n \to \infty$. Since this holds for arbitrary $\varphi \in L^2(Q)$, this is equivalent to the desired weak convergence $K(g_n)u_n \rightharpoonup K(\bar g)\bar u$ in $L^2(Q)$.
\hfill $\Box$ \end{proof}

\subsection{Necessary optimality conditions}
\subsubsection{Adjoint equation}
In the next step of our analysis, we establish the necessary optimality conditions for a (local) solution $\bar g$ of
the optimization problem (PG).
This optimization problem is defined by 
\be \label{E:4.12}
 \left\{ \begin{array}{l} \min J(g), \\[1ex]
 \ds 0 \le g(t) \le \beta \quad \text{for almost all } t \in [0,T],\\[1ex]
  \ds \int_0^T g(\tau) \, d\tau =1. \end{array} \right.
\ee
Although the admissible set belongs to $L^\infty(0,T)$, we consider this as an optimization problem in the Hilbert space $L^2(0,T)$.

To set up associated necessary optimality conditions for an optimal solution  of \eqref{E:4.12}, we first determine a useful
expression for the derivative of the objective functional $J$. We have
\[
\begin{split}
 J(g)&=\ds \frac{1}{2} \iint_Q c_Q(u_g -u_d)^2 \, dxdt + \frac{\nu}{2} \int_0^T g(t)^2 \, dt\\
&\qquad \qquad  =
 \ds \frac{1}{2} \iint_Q c_Q (G(g) -u_d)^2 \, dxdt + \frac{\nu}{2} \int_0^T g(t)^2 \, dt.
 \end{split}
\]
Let now be an arbitrary (i.e. not necessarily optimal) $\bar g \in L^\infty(0,T)$ be given and let $\bar u = G(\bar g)$
be the associated state. Then we obtain for  $h \in L^\infty(0,T)$
\begin{eqnarray}
J'(\bar g)h &=& \nu \int_0^T  \bar g(t) \, h(t) \, dt  + \iint_Q c_Q(\bar{u}-u_d)(G'(\bar{u})h)\, dxdt \nonumber \\
&=&  \int_0^T  \nu\bar g(t) h(t) dt  + \iint_Q c_Q(x,t)(\bar{u}(x,t) -u_d(x,t)) z(x,t)dxdt \label{E:4.13b}
\end{eqnarray}
with  the solution $z$  to the equation \eqref{E:4.8} for $u_g := u_{\bar g} = \bar u$.

The implicit appearance of $h$ via $z$  can be converted to an explicit  one by an {\em adjoint equation}.
This is the following equation:
\be \label{E:4.14}
\begin{array}{rcl}
 \begin{array}{rcl} (-\partial_t \varphi - \Delta \varphi +R'(\bar{u})\varphi + \kappa\,\varphi)(x,t)
& = &  \ds \kappa\int_0^T \bar g(\tau) \varphi(x,t+\tau) \, d\tau \\[2ex]
&&\qquad + c_Q(x,t)(\bar{u}(x,t) -u_d(x,t))\\[1ex]
&& \qquad\qquad \text{ a.e. in } Q,\\[1ex]
 \partial_n \varphi&=& 0 \quad \text{in } \Sigma,\\[1ex]
   \varphi(\cdot,t)&=&0\quad t \in [T,2T].
 \end{array}
\end{array}
\ee

The solution $\bar \varphi$ of \eqref{E:4.14} is said to be the {\em adjoint state} associated with
$\bar g$.
\begin{lemma}\label{L:4.3} Let $\bar g,\, \bar h \in L^\infty(0,T)$, and $\bar u = u_{\bar g}$ be given. If $z$ is the solution to the
linearized equation \eqref{E:4.8} for $u_g := \bar u$ and $\bar \varphi$ is the unique solution to the adjoint equation \eqref{E:4.14},
then the  identity
\begin{equation} \label{E:L43}
\iint_Q (c_Q\,(\bar{u}-u_d)\, z)(x,t) \, dxdt = \kappa \, \iint_Q \bar \varphi(x,t) 
\left(\int_0^T h(\tau)\bar{u}(x,t-\tau) \, d\tau
 \right)\, dxdt
\end{equation}
 is fulfilled:
\end{lemma}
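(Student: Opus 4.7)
The plan is the classical duality computation: test the linearized equation \eqref{E:4.8} against $\bar\varphi$ and test the adjoint equation \eqref{E:4.14} against $z$, then subtract. Denote by $L$ the parabolic operator $Lz := \partial_t z - \Delta z + R'(\bar u)z + \kappa z$ and by $L^\ast \bar\varphi = -\partial_t\bar\varphi - \Delta\bar\varphi + R'(\bar u)\bar\varphi + \kappa\bar\varphi$ its formal adjoint. I would first compute $\iint_Q \bar\varphi \cdot (Lz)\, dxdt$ by substituting the right-hand side of \eqref{E:4.8}, obtaining
\[
\kappa \iint_Q \bar\varphi(x,t)\!\int_0^T h(\tau)\bar u(x,t-\tau)\,d\tau\, dxdt + \kappa\iint_Q \bar\varphi(x,t)\!\int_0^t \bar g(\tau) z(x,t-\tau)\,d\tau\, dxdt.
\]

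Next I would integrate by parts in space and time. The spatial integration by parts is clean because both $z$ and $\bar\varphi$ satisfy homogeneous Neumann conditions. The temporal integration uses that $z(\cdot,0)=0$ and $\bar\varphi(\cdot,T)=0$, so the boundary terms at $t=0$ and $t=T$ both vanish; the sign flip on $\partial_t$ produces $L^\ast\bar\varphi$. Using \eqref{E:4.14} to replace $L^\ast\bar\varphi$ this gives
\[
\iint_Q z \cdot L^\ast \bar\varphi\, dxdt = \kappa \iint_Q z(x,t)\!\int_0^T \bar g(\tau)\bar\varphi(x,t+\tau)\,d\tau\, dxdt + \iint_Q c_Q(\bar u - u_d) z\, dxdt.
\]
Equating the two expressions for $\iint_Q \bar\varphi (Lz)\, dxdt$, the target identity \eqref{E:L43} reduces to the kernel-adjoint equality
\[
\iint_Q \bar\varphi(x,t)\!\int_0^t \bar g(\tau) z(x,t-\tau)\,d\tau\, dxdt = \iint_Q z(x,t)\!\int_0^T \bar g(\tau) \bar\varphi(x,t+\tau)\,d\tau\, dxdt.
\]

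The delicate step is verifying this equality, and it is where the extended domain $[0,2T]$ of $\bar\varphi$ together with the final condition $\bar\varphi(\cdot,t)=0$ for $t\in[T,2T]$ enters. In the left-hand side I would substitute $s:=t-\tau$ and apply Fubini, turning it into
\[
\int_\Omega\!\int_0^T z(x,s)\!\int_0^{T-s}\bar g(\tau)\bar\varphi(x,s+\tau)\,d\tau\, ds\, dx.
\]
Because $\bar\varphi(x,s+\tau)=0$ whenever $s+\tau\ge T$, the inner integral may be extended from $[0,T-s]$ to $[0,T]$ without change, producing exactly the right-hand side.

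The main obstacle is bookkeeping rather than a new idea: one must match the truncated convolution in the state equation (which runs up to $t$) with the corresponding shifted convolution in the adjoint (which runs up to $T$), and the final-time condition $\bar\varphi|_{[T,2T]}=0$ is precisely what is needed to close this gap. Once this Fubini argument is organized, regularity of $z,\bar\varphi\in W(0,T)$ guaranteed by Theorem \ref{T:3.1} and Lemma \ref{L5.1} justifies every integration by parts, completing the proof.
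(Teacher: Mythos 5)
Your proposal is correct and follows essentially the same route as the paper: test \eqref{E:4.8} with $\bar\varphi$ and \eqref{E:4.14} with $z$, use $z(\cdot,0)=0$ and $\bar\varphi(\cdot,T)=0$ to kill the temporal boundary terms, and reduce everything to the kernel-adjoint identity, which the paper likewise verifies by the substitution $\eta=t-\tau$, Fubini, and the extension of the inner integral from $[0,T-\eta]$ to $[0,T]$ justified by $\bar\varphi(\cdot,t)=0$ on $[T,2T]$. No gaps.
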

\begin{proof}
We multiply the first equation in \eqref{E:4.8} by the adjoint state $\bar \varphi$ as test function and the first equation in
\eqref{E:4.14} by $z$. After integration on $Q$ and some partial integration with respect to $x$, we obtain
\[
\begin{split}
&\iint_Q \left(\partial_t z \, \bar \varphi + \nabla z \cdot \nabla \bar \varphi + (R'(\bar u) + \kappa)z\, \bar \varphi \right)dxdt
\\
&\qquad \qquad = \kappa \, \iint_Q \left(\int_0^T \bar g(\tau) z(x,t-\tau)\, d\tau\right) 
\bar\varphi(x,t)\, dxdt \\
&\qquad \qquad \qquad
+ \kappa \iint_Q \left(\int_0^T h(\tau)\bar u(x,t-\tau)\, d\tau\right)
\bar\varphi(x,t)\, dxdt
\end{split}
\]
and
\[
\begin{split}
&\iint_Q \left(- z \, \partial_t \bar\varphi + \nabla z \cdot \nabla \bar\varphi 
+ (R'(\bar u) + \kappa)z\, \bar \varphi \right)dxdt
\\
&\qquad =  \kappa\,\iint_Q\left( \int_0^T \bar g(\tau) \bar\varphi(x,t+\tau) \, d\tau\right) z(x,t)\, dxdt
+ \iint_Q c_Q(\bar u - u_d)\, z\, dxdt.
\end{split}
\]
Integrating by parts with respect to $t$, we see that
\[
\iint_Q (- z) \, \partial_t \bar\varphi \, dxdt = \iint_Q \bar \varphi \, \partial_t z \,  dxdt;
\]
notice that we have $z(0) = 0$ and $\bar\varphi(T) = 0$. Comparing both weak formulations above, it turns out
that we only have to confirm the equation
\be\label{E:confirm}
\iint_Q \int_0^T \bar g(\tau) z(x,t-\tau)\, \bar\varphi(x,t)\, d\tau \, dxdt = \iint_Q\int_0^T \bar g(\tau)\bar \varphi(x,t+\tau) \,z(x,t)\, d\tau \, dxdt.
\ee
Then the claim of the Lemma follows.  To show \eqref{E:confirm} , we proceed as follows:
\begin{equation} \label{E:Fubini}
\begin{split}
&\iint_Q \int_0^T \bar g(\tau) z(x,t-\tau) \varphi(x,t)\, d\tau\, dxdt \hspace{4cm}\\
&\hspace{2cm} =\int_\Omega \int_0^T \int_0^T \bar g(\tau) \, z(x,t-\tau)\, \bar\varphi(x,t)\,d\tau dtdx\\
&\hspace{2cm} =\int_\Omega \int_0^T \int_0^t \bar g(\tau) \, z(x,t-\tau)\, \bar\varphi(x,t)\,d\tau dtdx\\
&\hspace{2cm}=\int_\Omega \int_0^T \int_0^t \bar g(t-\eta) \, z(x,\eta)\, \bar\varphi(x,t)\,d\eta dt\, dx\\
&\hspace{2cm}=\int_\Omega \int_0^T \int_\eta^T \bar g(t-\eta) \, \bar\varphi(x,t)\,dt \, z(x,\eta) \, d\eta\, dx\\
&\hspace{2cm}=\int_\Omega \int_{0}^T \int_0^{T-\eta} \bar g(\sigma) \, \bar\varphi(x,\eta+\sigma)\,d\sigma \, z(x,\eta) \, d\eta\, dx\\
&\hspace{2cm}=\int_\Omega \int_{0}^T \int_0^T \bar g(\sigma) \, \bar\varphi(x,\eta + \sigma)\,d\sigma \, z(x,\eta) \, d\eta\, dx\\
&\hspace{2cm}=\iint_Q \int_{0}^T \bar g(\tau) \, \bar\varphi(x,t + \tau) \, z(x,t)\,d\tau \, dxdt.
\end{split}
\end{equation}
We used $z(x,t-\tau) = 0$ for $\tau > t$ in the second equation, the substitution $\eta = t-\tau$ in the third, the Fubini theorem
in the fourth, the substitution $\sigma = t-\eta$ in the fifth, the property $\bar\varphi(x,t) = 0$ for $t \ge T$ in the sixth equation. Finally,
we re-named the variables.
\hfill $\Box$ \end{proof}

\begin{corollary} \label{Cor:4.4} At any $\bar g \in L^\infty(0,T)$, the derivative $J'(\bar g)\,h$ in the direction $h \in L^\infty(0,T)$
is given by
\[
J'(\bar g)\,h = \int_0^T \nu \, \bar g(t)\, h(t)\,dt + \kappa  \int_0^T h(\tau) \left(\iint_Q \bar \varphi(x,t) \bar{u}(x,t-\tau) \, dxdt\right) d\tau,
\]
where $\bar \varphi$ is the unique solution of the adjoint equation \eqref{E:4.14}.
\end{corollary}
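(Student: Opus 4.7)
My plan is to derive the corollary as a direct consequence of the two results already assembled in the text: formula \eqref{E:4.13b} for $J'(\bar g)h$ in terms of the linearized state $z$, and Lemma \ref{L:4.3}, which converts the tracking term into an expression involving the adjoint state $\bar\varphi$ together with $h$. No new PDE analysis is required; the step is essentially bookkeeping together with a single application of Fubini's theorem.

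First, I would recall that by \eqref{E:4.13b} applied at $\bar g$ with associated state $\bar u = u_{\bar g}$, the derivative reads
\[
J'(\bar g)h = \int_0^T \nu\,\bar g(t)\,h(t)\,dt + \iint_Q c_Q(x,t)(\bar u(x,t)-u_d(x,t))\,z(x,t)\,dxdt,
\]
where $z$ solves the linearized equation \eqref{E:4.8} with direction $h$ and reference state $\bar u$. Since the adjoint equation \eqref{E:4.14} is well-posed (by the same linear parabolic theory used for the linearization, applied after the standard time reversal $t \mapsto T-t$), a unique solution $\bar\varphi$ exists, so Lemma \ref{L:4.3} applies and converts the tracking term into
\[
\iint_Q c_Q(\bar u - u_d)\,z\,dxdt = \kappa \iint_Q \bar\varphi(x,t)\left(\int_0^T h(\tau)\,\bar u(x,t-\tau)\,d\tau\right)dxdt.
\]

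Second, I would interchange the order of integration by Fubini's theorem, which is legitimate because $h \in L^\infty(0,T)$, $\bar\varphi \in L^2(Q)$, and $\bar u \in L^\infty(Q)$ (so the integrand is absolutely integrable over $Q \times [0,T]$). This yields
\[
\kappa \iint_Q \bar\varphi(x,t)\int_0^T h(\tau)\,\bar u(x,t-\tau)\,d\tau\,dxdt = \kappa\int_0^T h(\tau)\left(\iint_Q \bar\varphi(x,t)\,\bar u(x,t-\tau)\,dxdt\right)d\tau.
\]
Substituting this back into the expression for $J'(\bar g)h$ gives exactly the formula stated in the corollary.

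I do not expect any serious obstacle. The only point that needs a brief justification is measurability/integrability for Fubini, which follows immediately from the boundedness of $h$ and $\bar u$ and the $L^2$-membership of $\bar\varphi$ (the latter is part of Lemma \ref{L:4.3} and its proof, where $\bar\varphi$ is obtained as a solution of a linear parabolic equation in $W(0,T)$ with $L^\infty(Q)$ right-hand side). Consequently the corollary follows by combining \eqref{E:4.13b}, Lemma \ref{L:4.3}, and Fubini's theorem.
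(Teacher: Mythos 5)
Your proposal is correct and follows exactly the paper's own argument: insert the identity of Lemma \ref{L:4.3} into \eqref{E:4.13b} and interchange the order of integration in $t$ and $\tau$ via Fubini. The extra remarks on the integrability needed for Fubini are a harmless (and welcome) elaboration of the same route.
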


This follows immediately by inserting the right-hand side of \eqref{E:L43} in \eqref{E:4.13b} and by interchanging the order
of integration with respect to $t$ and $\tau$.

\subsubsection{Necessary optimality conditions for (PG)}

Let us now establish  the necessary optimality conditions for an optimal solution $\bar g$ of \eqref{E:4.12}.
They can be derived by the Lagrangian function $L: L^\infty(0,T) \times \mathbb{R} \to \mathbb{R}$,
\[
 L(g,\mu):=J(g)+\mu\left(\int_0^T g(\tau) \, d\tau -1\right).
\]
If $\bar g$ is an optimal solution, then  there exists  a real Lagrange multiplier $\bar{\mu}$ such that
the variational inequality
\[
 J'(\bar{g})(g-\bar{g}) + \bar{\mu} \int_0^T(g-\bar{g}) \, dt \geq 0 \qquad \text{ for all } g \geq 0
\]
is satisfied. Inserting the result of Corollary \ref{Cor:4.4} for $h := g - \bar g$, we find
\be \label{E:4.17}
\begin{split}
&\int_0^T\left(\nu \bar{g}(t) + \bar \mu +\kappa \iint_Q \bar \varphi(x,s) \bar{u}(x,s-t) \, dxds \right)
 (g(t)-\bar{g}(t)) \, dt  \geq  0
\end{split}
\ee
 for all $0 \le  g \le \beta$.
\begin{remark} For a Lagrange multiplier rule to hold, a regularity condition must be fulfilled. Here, the constraints
are obviously regular at any $\bar g$: Define $F: L^2(0,T) \to \mathbb{R}$ by 
\[
F(g) = \int_0^T g(\tau)\,d\tau - 1.
\]
Then 
\[
F'(g) h = \int_0^T h(\tau)\,d\tau,
\]
and hence $F'(g): L^2(0,T) \to \mathbb{R}$ is surjective for all $g \in L^2(0,T)$.
\end{remark}

A simple pointwise discussion of \eqref{E:4.17} leads to the following complementarity conditions for almost all
$t \in [0,T]$:
\begin{equation} \label{E:4.18}
 g(t) =\left\{
\begin{array}{rcl}
 0 &\mbox{if}&\ds  \nu \bar{g}(t) +  \bar \mu +\kappa\,
\iint_Q \bar \varphi(x,s) \bar{u}(x,s-t) \, dxds > 0\\[2ex]
\beta &\mbox{if}&\ds \nu \bar{g}(t) +  \bar \mu +\kappa\,
\iint_Q \bar \varphi(x,s) \bar{u}(x,s-t) \, dxds < 0.
\end{array}
\right.
\end{equation}
If the expression in right-hand side above vanishes, then we obviously have
\[
 g(t) = -\frac{1}{\nu} \left( \bar \mu +\kappa\,
\iint_Q \bar \varphi(x,s) \bar{u}(x,s-t) \, dxds\right).
\]
In a known way, the last three relations can be equivalently expressed by the projection formula
\[
\bar g(t) = \mathbb{P}_{[0,\beta]}\left(-\frac{1}{\nu} \left(\bar \mu +\kappa\,
\iint_Q \bar \varphi(s) \bar{u}(x,s-t) \, dxds \right) \right),
\]
where $\mathbb{P}_{[0,\beta]}: \mathbb{R} \to [0,\beta]$ is defined by
\[
\mathbb{P}_{[0,\beta]}(x) = \max(0,\min(\beta,x)).
\]

\section{Discussion of (PS)}

Let us now discuss the changes that are needed to establish the necessary optimality conditions
for the problem (PS) with the particular form \eqref{E:2.5} of $g$. Now,  $\kappa, t_1,$ and $t_2$
are our control variables. Let us denote by
$u_{(\kappa,t_1,t_2)}$ the unique state associated with $(\kappa,t_1,t_2)$.


The existence of  the derivatives $\partial_{t_i} u_{(\kappa,t_1,t_2)}, \, i=1,2$,  and
$\partial_{\kappa} u_{(\kappa,t_1,t_2)}$ can be shown
again by the implicit function theorem. We omit these details, because one can proceed analogously to the discussion 
for (PG). To shorten the notation, we write
\[
 z_i:=\partial_{t_i}  u_{(\kappa,t_1,t_2)}, \; i=1,2, \quad z_3 := \partial_{\kappa}  u_{(\kappa,t_1,t_2)}.
\]
By implicit differentiation, we find the functions $z_i$ from  linearized equations. Assume that the
derivatives have to be determined at the point $(\kappa,t_1,t_2)$ and fix the associated state \
$u:= u_{(\kappa,t_1,t_2)}$
for a while.
Then, $z_1$ solves
\be \label{E:5.1}
\begin{split}
 &\left(\partial_t z_1- \Delta z_1 +R'(u)z_1+\kappa\,  z_1\right)(x,t)= \left.\frac{\partial}{\partial t_1} \left[ \frac{\kappa\,}{t_2-t_1}
 \int_{t_1}^{t_2} u(x,t-\tau) \, d\tau \right]\right.\\
&\quad =\frac{\kappa}{t_2-t_1} \left[\frac{1}{t_2-t_1}\int_{{t_1}}^{{t_2}}
{u}(x,t-\tau) \, d\tau -  {u}(x,t-{t_1})
+ \int_{{t_1}}^{{t_2}}
z_1(x,t-\tau) \, d\tau\right].
\end{split}
\ee
Analogously, we find  for $z_2$
\be \label{E:5.2}
\begin{split}
 &(\partial_t  z_2- \Delta z_2 +R'(u)z_2+\kappa\,  z_2)(x,t)= \\
&\quad=\frac{- \kappa}{{t_2}-{t_1}} \left[\frac{1}{{t_2}-{t_1}}\int_{{t_1}}^{{t_2}}
{u}(x,t-\tau) \, d\tau -  {u}(x,t-{t_2})
+ \int_{{t_1}}^{{t_2}}
z_2(x,t-\tau) \, d\tau\right]
\end{split}
\ee
and for $z_3$
\[
\begin{split}
 &(\partial_t  z_3- \Delta z_3 +R'(u)z_3+\kappa \,   z_3 + u)(x,t) =
 \left.\frac{\partial}{\partial \kappa} \left[ \frac{\kappa\,}{t_2-t_1}
 \int_{t_1}^{t_2}u(x,t-\tau) \, d\tau \right]\right.\\
&\hspace{0.8cm}=\frac{1}{{t_2}-{t_1}} \left[ \int_{{t_1}}^{{t_2}}
{u}(x,t-\tau) \, d\tau  + \kappa \, \int_{{t_1}}^{{t_2}}
z_3(x,t-\tau) \, d\tau\right].
\end{split}
\]
Therefore, the equation for $z_3$ is
\begin{equation}  \label{E:5.3}
\begin{split}
&(\partial_t z_3- \Delta z_3 +R'(u)z_3+\kappa \,   z_3 )(x,t)- \frac{\kappa}{t_2-t_1} \int_{{t_1}}^{{t_2}}
z_3(x,t-\tau) \, d\tau \\
&\hspace{1cm} = \frac{1}{{t_2}-{t_1}} \int_{{t_1}}^{{t_2}}
{u}(x,t-\tau) \, d\tau  - u(x,t).
\end{split}
\end{equation}
Again, we introduce an adjoint equation to set up the optimality conditions. To this aim, let $(\kappa,t_1,t_2)$ 
an arbitrary fixed triplet and $u_{(\kappa,t_1,t_2)}$ be the associated state function. The adjoint equation
is
\be \label{E:5.5}
 \begin{array}{rcl}
 \left(-\varphi_t - \Delta \varphi +R'({u_{(\kappa,t_1,t_2)}})\varphi+ \kappa \, \varphi\right)(x,t) &=& \ds 
 \frac{\kappa}{ t_2- t_1}\int_{ t_1}^{ t_2} \varphi(x,t+\tau)
  \, d\tau\\[3ex]
  && + c_Q(x,t)({u}_{(\kappa,t_1,t_2)}(x,t) -u_d(x,t)) \\[1ex]
  &&\hspace{4cm} \text{ in } Q,\\[1ex]
 \partial_n \varphi&=& 0 \quad \text{in }\Sigma,\\[1ex]
  \varphi(x,t)&=&0 \quad \text{in } \Omega\times[T,2T]. \\
\end{array} 
\ee
This equation has a unique solution $\varphi \in L^\infty(Q)$ denoted by $\varphi_{(\kappa,t_1,t_2)}$ to indicate
the correspondence with $(\kappa,t_1,t_2)$. Existence and uniqueness can be shown in a standard way by the
substitution $\tilde t := T-t$ that transforms this equation to a standard forward equation that can be handled in
the same way as the state equation.

\begin{theorem}[Derivative of $J_S$] \label{T:5.1} Let $(\kappa,t_1,t_2)$ be given, $u:=u_{(\kappa,t_1,t_2)}$
be the associated state, and $\varphi := \varphi_{(\kappa,t_1,t_2)}$ be the associated adjoint state, i.e. the unique solution
of the adjoint equation \eqref{E:5.5}. Write for short $\delta := 1/(t_2-t_1)$. Then the partial derivatives of $J_S$ at $(\kappa,t_1,t_2)$ are given
by 
\begin{eqnarray*}
\partial_{t_1} J_S&=& 
 \nu t_1+  \frac{\kappa}{\delta} \iint_Q {\varphi}(x,t)
 \left[ \frac{1}{\delta}  \int_{{t_1}}^{{t_2}} {u}(x,t-\tau) \, d\tau -
  {u}(x,t- {t_1})\right] \, dxdt, \\[2ex]
\partial_{t_2} J_S&=&  \nu t_2-  \frac{\kappa}{\delta} \iint_Q {\varphi}(x,t)
 \left[ \frac{1}{\delta}  \int_{t_1}^{t_2} {u}(x,t-\tau) \, d\tau -
 {u}(x,t- {t_2}) \right]\, dxdt\\[2ex]
 \partial_{\kappa} J_S &=&  \nu \kappa + \iint_Q  \varphi(x,t) \left[ \frac{1}{\delta}  \int_{t_1}^{t_2}{u}(x,t-\tau)\, d\tau - u(x,t)\right]\, dxdt.
 \end{eqnarray*}
\end{theorem}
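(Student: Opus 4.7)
The plan is to prove all three formulas by the same two-step scheme: first apply the chain rule to express the partial derivatives of $J_S$ as integrals of $c_Q(u-u_d)$ against the sensitivities $z_1, z_2, z_3$, then use the adjoint equation \eqref{E:5.5} to eliminate the implicit appearance of these sensitivities in favor of the explicit right-hand sides of \eqref{E:5.1}--\eqref{E:5.3}. Concretely, differentiating the objective gives, for $i=1,2$,
\[
\partial_{t_i} J_S = \nu t_i + \iint_Q c_Q(x,t)\bigl(u(x,t)-u_d(x,t)\bigr)\, z_i(x,t)\, dxdt,
\]
and analogously $\partial_\kappa J_S = \nu\kappa + \iint_Q c_Q(u-u_d) z_3\, dxdt$. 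The task reduces to rewriting each of these $Q$-integrals in terms of $\varphi = \varphi_{(\kappa,t_1,t_2)}$ and the forcing terms of the sensitivity equations.

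For the adjoint identity, I would multiply, say, equation \eqref{E:5.1} by $\varphi$ and the adjoint equation \eqref{E:5.5} by $z_1$, integrate both over $Q$, and integrate by parts. The parabolic principal part contributes $\iint_Q(\partial_t z_1)\varphi\,dxdt + \iint_Q z_1(-\partial_t\varphi)\,dxdt = 0$ after using $z_1(\cdot,0)=0$ and the terminal condition $\varphi(\cdot,T)=0$, while the Laplacian and zero-order terms cancel pairwise because of the homogeneous Neumann boundary conditions. What remains is the matching of the two nonlocal terms: the $z_1$-convolution on the right of \eqref{E:5.1}, tested against $\varphi$, must equal the $\varphi$-convolution appearing in \eqref{E:5.5}, tested against $z_1$. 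This is exactly the Fubini-type identity established in the chain of equalities \eqref{E:Fubini} inside the proof of Lemma \ref{L:4.3}, specialized to the step-function kernel $g(\tau) = (t_2-t_1)^{-1}\mathbf{1}_{[t_1,t_2]}(\tau)$; the supports of $z_1$ on $t<0$ and of $\varphi$ on $t>T$ guarantee that the substitutions $\eta = t-\tau$, $\sigma = t-\eta$ and the interchange of integration order are justified.

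Once this cancellation is used, the duality collapses to
\[
\iint_Q c_Q(u-u_d)z_1 \,dxdt = \iint_Q \varphi(x,t)\,\frac{\kappa}{\delta}\Bigl[\frac{1}{\delta}\int_{t_1}^{t_2} u(x,t-\tau)\,d\tau - u(x,t-t_1)\Bigr]\,dxdt,
\]
which inserted into the chain-rule expression yields the stated formula for $\partial_{t_1}J_S$. The same argument applied to \eqref{E:5.2} produces $\partial_{t_2}J_S$ with the opposite sign and $u(x,t-t_2)$ in place of $u(x,t-t_1)$, and applied to \eqref{E:5.3} produces $\partial_\kappa J_S$ with forcing $\delta^{-1}\int_{t_1}^{t_2} u(x,t-\tau)\,d\tau - u(x,t)$.

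The main technical obstacle is verifying the Fubini-type identity for the delayed kernel in a form that handles the mismatched supports of $z_i$ and $\varphi$; however, since \eqref{E:5.5} is precisely the specialization of \eqref{E:4.14} to the step-function kernel, this step is essentially a direct transcription of \eqref{E:Fubini}, so no genuinely new analytic difficulty arises. The only other point requiring care is the differentiability of the map $(\kappa,t_1,t_2)\mapsto u_{(\kappa,t_1,t_2)}$, but this follows by the implicit function theorem in the same manner as for (PG), as already indicated in the text preceding the theorem.
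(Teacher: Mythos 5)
Your proposal is correct and follows essentially the same route as the paper's proof: multiply the sensitivity equation \eqref{E:5.1} by $\varphi$ and the adjoint equation \eqref{E:5.5} by $z_1$, integrate over $Q$, cancel the principal parts using $z_1(\cdot,0)=0$, $\varphi(\cdot,T)=0$ and the Neumann conditions, and match the two nonlocal terms via the Fubini identity \eqref{E:Fubini} specialized to the step-function kernel, before inserting the resulting duality into the chain-rule expression $\partial_{t_i}J_S=\nu t_i+\iint_Q c_Q(u-u_d)z_i\,dxdt$. No substantive differences from the paper's argument.
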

\begin{proof} We verify the expression for $\partial_{t_1} J_S(\kappa,t_1,t_2)$, the other formulas can be shown analogously.
To this aim, let $z_1 = \partial_{t_1}u_{(\kappa,t_1,t_2)}$ be the solution of the linearized equation \eqref{E:5.1}. For convenience,
we write $z:= z_1$ within this proof. Following the proof of Lemma \ref{L:4.3}, we multiply \eqref{E:5.1} by $\varphi$ and integrate
over $Q$. We obtain
\begin{equation}\label{E:5.6}
\begin{split}
&\iint_Q \left(\partial_t z \, \varphi + \nabla z \cdot \nabla \varphi + (R'(u) + \kappa)z\, \varphi \right)dxdt
\\[2ex]
&\qquad =\frac{\kappa}{\delta} \iint_Q \varphi(x,t) \left[\frac{1}{\delta} \int_{{t_1}}^{{t_2}}
{u}(x,t-\tau) \, d\tau- {u}(x,t-{t_1})\right]\, dxdt\\[1ex]
&\qquad \qquad +  \frac{\kappa}{\delta} \iint_Q \varphi(x,t) \int_{{t_1}}^{{t_2}} z(x,t-\tau) \, d\tau dxdt.
\end{split}
\end{equation}
Next, we multiply the adjoint equation \eqref{E:5.5} by $z$ and integrate over $Q$ to find
\begin{equation} \label{E:5.7}
\begin{split}
&\iint_Q \left(- z \, \partial_t \varphi + \nabla z \cdot \nabla \varphi + (R'(u) + \kappa)z\, \varphi \right)dxdt
\\[2ex]
&\qquad =  \frac{\kappa}{\delta}\iint_Q \int_{t_1}^{t_2} \varphi(x,t+\tau)\, d\tau \ z(x,t)\,   dxdt
+ \iint_Q c_Q(\bar u - u_d)\, z\, dxdt.
\end{split}
\end{equation}
Now recall that 
\[
 g(t) = \left\{
 \begin{array}{cl} \ds \frac{1}{t_2 -t_1}, & t_1 \leq t \leq t_2 \\[2ex]
 0, & \mbox{elsewhere}.
 \end{array} \right.
\]
Therefore, we can write
\[
\begin{split}
&\frac{1}{\delta} \iint_Q \varphi(x,t) \int_{{t_1}}^{{t_2}} z(x,t-\tau) \, d\tau dxdt 
=  \iint_Q \int_0^T \varphi(x,t) g(\tau) z(x,t-\tau) \, d\tau dxdt \\
&= \iint_Q \int_{0}^{T} g(\tau) \varphi(x,t+\tau)d\tau z(x,t)dxdt = 
\frac{1}{\delta}\iint_Q \int_{t_1}^{t_2} \varphi(x,t+\tau)d\tau z(x,t)dxdt,
\end{split}
\]
where the second equation follows from \eqref{E:Fubini}. In view of 
\[
\iint_Q (- z) \, \partial_t \varphi \, dxdt = \iint_Q \varphi \, \partial_t z \,  dxdt,
\]
a comparison of the equations \eqref{E:5.6} and \eqref{E:5.7} yields
\[
\begin{split}
&
\iint_Q c_Q(u - u_d)\, z\, dxdt = \\
&
\qquad \qquad = \frac{\kappa}{\delta}\iint_Q  \varphi(x,t) \left[\frac{1}{\delta}\int_{{t_1}}^{{t_2}}
{u}(x,t-\tau) \, d\tau- {u}(x,t-{t_1})\right]dxdt.
\end{split}
\]
Since 
\[
\partial_{t_1} \left[\frac{1}{2}\iint_Q c_Q (u_{(\kappa,t_1,t_2)}-u_d)^2 dxdt \right]= \iint_Q c_Q (u_{(\kappa,t_1,t_2)}-u_d) \, z_1\, dxdt,
\]
the first claim of the theorem follows immediately.
\hfill $\Box$ \end{proof}

As a direct consequence of the theorem on the derivative of $J_S$, we obtain the following corollary.
\begin{corollary}[Necessary optimality condition for (PS)] Let $(\bar \kappa,\bar t_1,\bar t_2)$ be optimal for the problem
{\rm (PS)} and let $\bar u := u_{(\bar \kappa,\bar t_1,\bar t_2)}$ and $\bar \varphi := \varphi_{(\bar \kappa,\bar t_1,\bar t_2)}$
denote the associated state and adjoint state, respectively. Then, with the gradient 
$\nabla J_S(\bar \kappa,\bar t_1,\bar t_2)$ defined by Theorem  \ref{T:5.1} with $\bar \varphi$ and $\bar u$
inserted for $\varphi$ and $u$, respectively, the variational inequality
\begin{equation}
 \nabla J_S(\bar \kappa,\bar t_1,\bar t_2)\cdot (\kappa - \bar \kappa ,t_1 - \bar t_1,t_2 - \bar t_2)^\top \ge 0
 \quad \forall (\kappa, t_1, t_2) \in C_\delta
 \end{equation}\label{E:5.6b}
 is satisfied.
 \end{corollary}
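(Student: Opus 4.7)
The strategy is to combine the differentiability information from Theorem \ref{T:5.1} with a standard first-order necessary condition for minimization over a convex set. First I would verify that $C_\delta$ is convex: given two admissible triplets $(\kappa^i,t_1^i,t_2^i)$, $i=1,2$, the constraints $0\le t_1\le T$, $t_2\le T$, and $t_2-t_1\ge \delta$ are preserved under convex combinations, and $\kappa\in\mathbb{R}$ is unrestricted, so $C_\delta$ is indeed a convex subset of $\mathbb{R}^3$.

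Next, I would fix an arbitrary admissible $(\kappa,t_1,t_2)\in C_\delta$ and, using the convexity just established, consider the segment
\[
p(s):=(\bar\kappa,\bar t_1,\bar t_2)+s\bigl[(\kappa,t_1,t_2)-(\bar\kappa,\bar t_1,\bar t_2)\bigr],\qquad s\in[0,1],
\]
which lies entirely in $C_\delta$. Define $\phi(s):=J_S(p(s))$. By the implicit function theorem argument sketched at the start of this section (in analogy with the treatment for (PG)), the state map $(\kappa,t_1,t_2)\mapsto u_{(\kappa,t_1,t_2)}$ is continuously differentiable into $W(0,T)\cap\mathcal{C}(\bar Q)$ on a neighborhood of the optimum, hence $J_S$ itself is continuously differentiable on a neighborhood of $(\bar\kappa,\bar t_1,\bar t_2)$ and $\phi$ is differentiable on $[0,1]$. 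Applying the chain rule together with Theorem \ref{T:5.1}, one obtains
\[
\phi'(0)=\nabla J_S(\bar\kappa,\bar t_1,\bar t_2)\cdot(\kappa-\bar\kappa,t_1-\bar t_1,t_2-\bar t_2)^\top.
\]

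Finally, optimality of $(\bar\kappa,\bar t_1,\bar t_2)$ for (PS) forces $\phi$ to attain its minimum on $[0,1]$ at $s=0$, whence $\phi'(0)\ge 0$. Since $(\kappa,t_1,t_2)\in C_\delta$ was arbitrary, the claimed variational inequality follows.

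The only non-routine point is the justification that the gradient formula of Theorem \ref{T:5.1} is the true Fréchet derivative of the reduced functional $J_S$; this requires the differentiability of the control-to-state map mentioned but not proved in the section. I would dispatch this by the same implicit-function-theorem scheme used for (PG) in the previous subsection: write the feedback equation as $\mathcal{F}(u,\kappa,t_1,t_2)=0$ in $L^\infty(Q)\times\mathbb{R}^3\to L^\infty(Q)$, check that $\partial_u\mathcal{F}(\bar u,\bar\kappa,\bar t_1,\bar t_2)$ is invertible (the argument is identical to the one carried out after equation \eqref{E:4.4}, the only change being that the nonlocal operator is $\frac{\kappa}{t_2-t_1}\int_{t_1}^{t_2}(\cdot)(x,t-\tau)\,d\tau$ with bounded step kernel), and conclude $\mathcal{C}^2$ regularity of $(\kappa,t_1,t_2)\mapsto u_{(\kappa,t_1,t_2)}$. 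The derivatives $z_1,z_2,z_3$ computed in \eqref{E:5.1}--\eqref{E:5.3} are then exactly the partial derivatives of $u_{(\kappa,t_1,t_2)}$, and Theorem \ref{T:5.1} delivers $\nabla J_S$.
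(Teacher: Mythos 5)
Your proposal is correct and follows exactly the route the paper intends: the paper states this corollary as a ``direct consequence'' of Theorem \ref{T:5.1}, implicitly invoking the standard first-order necessary condition for minimizing a differentiable function over a convex set, which is precisely the segment-and-directional-derivative argument you spell out (including the convexity check on $C_\delta$ and the deferral of the differentiability of the control-to-state map to the implicit-function-theorem scheme already used for (PG)). No gaps; your write-up simply makes explicit what the paper leaves tacit.
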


Since the variable $\kappa$ is unrestricted, the associated part of the variational inequality amounts to
\[
\nu \bar \kappa - \iint_Q  \bar \varphi(x,t)  \bar u(x,t)\, dxdt
  +  \frac{1}{{\bar t_2}-{\bar t_1}}  \iint_Q {\bar \varphi}(x,t) {\bar u}(x,t-\bar t_2)\,dxdt = 0.
\]
If $(\bar t_1,\bar t_2)$ belongs to the interior of the admissible set $C_\delta$, then  the associated components of
$\nabla J_S$ must vanish as well, hence
\begin{eqnarray*}
\nu \bar t_1+  \frac{\bar \kappa}{{\bar t_2}-{\bar t_1}} \iint_Q {\bar \varphi}(x,t)\left[
 \frac{1}{{\bar t_2}-{\bar t_1}}\int_{{\bar t_1}}^{{\bar t_2}} {\bar u}(x,t-\tau) \, d\tau   -   {\bar u}(x,t- {\bar t_1}) \right]\, dxdt
& =& 0\\
 \nu \bar t_2-  \frac{\bar \kappa}{{\bar t_2}-{\bar t_1}} \iint_Q {\bar \varphi}(x,t) \left[\frac{1}{{\bar t_2}-{\bar t_1}}
\int_{{\bar t_2}}^{{\bar t_2}} {\bar u}(x,t-\tau) \, d\tau-
 {\bar u}(x,t- {\bar t_2})\right]\, dx dt &=& 0.
\end{eqnarray*}

\begin{remark}[Application of the formal Lagrange technique] \label{Rem:4.5} To find the form of $\nabla J_S$ and for 
establishing the necessary optimality conditions, it might be easier to apply the following standard technique that is slightly 
formal but leads to the same result:
We set up the Lagrangian function $\mathcal{L}$,
\[
\begin{split}
 &\mathcal{L}(u,\kappa,t_1,t_2,\varphi)= \frac{1}{2}\iint_Q c_Q (u-u_d)^2 dxdt -\iint_Q \varphi(x,t) \, \cdot \\
 &\qquad \cdot \left[(\partial_t u - \Delta u +R(u)
 +\kappa\,u)(x,t) -  \frac{\kappa}{t_2-t_1} \int_{t_1}^{t_2}u(x,t-\tau) \, d\tau  \right] dxdt.
 \end{split}
\]
If $(\kappa,t_1,t_2)$ is a given triplet, then the adjoint equation for the adjoint state $\varphi_{(\kappa,t_1,t_2)}$
is found by
\[
\partial_u \mathcal{L}(u,\kappa,t_1,t_2,\varphi)\, v =0 \quad \forall  v \, : \, v(\cdot,t)=0 \mbox{ for } t \leq 0.
\]
The derivatives of $J_S$ are obtained by associated derivatives of $\mathcal{L}$. For instance, we have
\[
\partial_{t_i}  J_S(\kappa,t_1,t_2) = \partial_{t_i} \mathcal{L}(u,\kappa,t_1,t_2,\varphi)
\]
if $u = u_{(\kappa,t_1,t_2)}$ and $\varphi = \varphi_{(\kappa,t_1,t_2)}$ are inserted after having taken
the derivative of $\mathcal{L}$  with respect to $t_i$. This obviously yields the first two components of $\nabla J_S$
in Theorem \ref{T:5.1}. For the third component with proceed analogously. 
\end{remark}

\section{Numerical examples for (PS)}

\subsection{Introductory remarks}

The numerical solution of the problems posed above requires techniques that are adapted to the desired type
of patterns $u_d$. In this paper, we concentrate on numerical examples for the simplified problem (PS), where the kernel $g$ is
a step function. Although this
problem is mathematically equivalent to a nonlinear optimization problem in a convex admissible set of $\mathbb{R}^3$,
the obtained patterns are fairly rich and interesting in their own. In a forthcoming paper to be published elsewhere,
we will report  on the numerical treatment of the more general problem (P), where a kernel function $g$ is to be
determined.

In this section, we present some results for the problem (PS), where a standard regularized tracking type functional $J$
is to be minimized in the set $C_\delta$.
It will turn out that (PS) is only suitable for tracking desired states $u_d$ of simple
structure. In all what follows, $\Omega = (a,b)$ is an open interval, i.e. we concentrate on the spatially one-dimensional
case.

Compared to many optimal control problems for semilinear parabolic equations that were investigated in literature,
the numerical solution of the problems posed here is a bit delicate. We are interested in approximating desired states $u_d$
that exhibit certain geometrical patterns. If they have a periodic structure, then the objective function
$J$ may exhibit many local minima with very narrow regions of attraction for the convergence of numerical techniques.
Therefore, the optimization methods should be started in a sufficiently small neighborhood around the
desired optimal solution. Moreover, the standard functional $J$ does not really fit to our needs. We will address
the tracking of periodic patterns $u_d$ in Section \ref{S7}.

\subsection{Discretization of the feedback system}
To discretize the feedback equation \eqref{E:2.5}, we apply an implicit Euler scheme with respect to the time
and a finite element approximation by standard piecewise linear and continuous ansatz functions (''hat functions'')
with respect to  the space variable.

For this purpose, the define a time grid by an equidistant partition of $[0,T]$ with mesh size $\tau = T/m$ and node points
$t_j = j \, \tau$, $j = 0,\ldots,m$. Associated with this time grid, functions $u_j: \Omega \to \mathbb{R}$ are to be computed
that approximate $u(\cdot,t_j)$, $j = 0,\ldots,m$, i.e. $u_j \sim u(\cdot,t_j)$. Based on the functions $u_j$, we
define grid functions $u^\tau: Q \to \mathbb{R}$ by piecewise linear approximation,
\[
u^\tau(x,t) = \frac{1}{\tau}[(t_{j+1}-t)\, u_j(x) + (t_{j}-t)\, u_{j+1}(x)], \mbox{ if } t \in [t_j,t_{j+1}], \ j = 0,\ldots,m.
\]
By the implicit Euler scheme, the following system of
nonlinear equations is set up,
\[
\begin{split}
 & \frac{u^\tau(x,t_{j+1})-u^\tau(x,t_{j})}{\tau} - \Delta u^\tau(x,t_{j+1})+R(u^\tau(x,t_{j+1})) \\
 &\hspace{1cm}= \frac{\kappa}{t_2-t_1} \int_{t_1}^{t_2} u^\tau(x,t_{j+1}-s)\: ds - \kappa \, u^\tau(x,t_{j+1}), \quad j = 0,\ldots,m-1.
\end{split}
\]
The spatial approximation is based on an equidistant partition of $\Omega=(a,b)$ with mesh size $h>0$. Here, we define standard
piecewise affine and continuous ansatz functions (hat functions) $\phi_i: \Omega \to \mathbb{R}$ and approximate the grid
function $u^\tau$ by $u_h^\tau$,
\[
u_h^\tau(x,t_j)=\sum_{i=0}^n u_{ji}\phi_i(x)
\]
with unknown coefficients $u_{ji} \in \mathbb{R}$, $j = 0,\ldots,m, \, i = 0,\ldots,n$.

To set up the discrete system, we define vectors $\mathbf{u}^j \in \mathbb{R}^{n+1}$ by
$\mathbf{u}^j = (u_{j0},\ldots,u_{jn})^\top$, $j = 0,\ldots,m$. Moreover, we establish the mass and stiffness matrices
\[
M:= \left(\int_\Omega \phi_k(x) \phi_\ell(x) \: dx\right)_{k,\ell=0}^n, \quad
 A:=\left(\int_\Omega \phi_k'(x) \phi_\ell'(x) \: dx\right)_{k,\ell=0}^n\, ,
\]
and, for $j = 0,\ldots,m$, the vectors
\begin{eqnarray*}
 && R(\mathbf{u}^j):=\left( \int_\Omega R(u_h^\tau(x,t_j))\, \phi_i(x)\: dx\right)_{i=0}^n\\
 && K(\mathbf{u}^j):=\frac{\kappa}{t_2-t_1}\left(\int_\Omega \int_{t_1}^{t_2} (u_h^\tau(x,t_j)-s)\,ds \, \phi_i(x)  \: dx\right)_{i=0}^n.
\end{eqnarray*}
\begin{remark}To compute the integrals $R(\mathbf{u}^j)$, we invoke a 4 point Gauss integration. 
Notice that the functions $x \mapsto R(u_h^\tau(x,t_j))$
are third order polynomials so that the integrand of $R(\mathbf{u}^j)$ is  a polynomial of order 4. 
Here, the 4 point Gauss integration is exact.
\end{remark}

For the computation of the vectors $K(\mathbf{u}^j)$, we use the trapezoidal rule. Here,  for $t_j-s \leq 0$, the values $u_0(x,t_j-s)$ must be inserted.
To increase the precision, the primitive of $u_0$ is used. The complete discrete system is
\[
 M\mathbf{u}^{j+1} - M\mathbf{u}^j+\tau A\mathbf{u}^{j+1} +{\tau}\, R(\mathbf{u}^{j+1}) =\tau K(\mathbf{u}^{j+1})-\tau \kappa M\mathbf{u}^{j+1}
\quad j=0,1,\ldots m-1.
\]
 We define
\[
 F(\mathbf{u}):=[(1+\tau \kappa)M+\tau A]\mathbf{u}+{\tau}\, R({\mathbf{u}})-\tau K(\mathbf{u}) .
 \]
In each time step, we solve the nonlinear equation  $F(\mathbf{u})=M \mathbf{u}^{j}$ to obtain $\mathbf{u}^{j+1}$.
To this aim, we apply a fixed point iteration. 

\subsection{Numerical examples for the standard tracking functional}

For testing our numerical method, we generate the desired state $u_d$ as solution of the feedback system for a given triplet $(\hat \kappa,\hat t_1,\hat t_2)$, i.e.
\[
u_d := u_{(\hat \kappa,\hat t_1,\hat t_2)}.
\]
If the regularization parameter $\nu$ is small, then the numerical solution of (PS) should return a vector that is close
to $(\hat \kappa,\hat t_1,\hat t_2)$.

In all of our computational examples,  we selected a small number $\delta > 0$ so that the restriction $t_2 -t_1\ge \delta$ was
never active. Moreover, the Tikhonov regularization parameter $\nu$ was set to zero, because a regularization was not needed
for having numerical stability.

In the first numerical examples of this paragraph, the aim is to generate desired wave type solutions  that expand with a given velocity.
\vspace{1ex}

\begin{example}[Desired traveling wave with pre-computed $u_d$] \label{Ex1}  We select $\Omega = (-20,20)$, $T = 40$,
$u_1 = 0, \, u_2 = 0.25, u_3 = 1$, $\kappa = 0.5$. Moreover, we take as initial function
\[
u_0(x,t) := \frac{1}{2}\ds\left(1-\tanh\left(\frac{x-vt}{2\sqrt{2}}\right)\right), \,
 \quad x \in \Omega,\, t \le 0,
\]
where $v=(1-2u_2)/\sqrt{2}$ is the velocity of the uncontrolled traveling wave given by $u_0$.
Following the strategy explained above, we fix the triplet 
$(\hat \kappa,\hat t_1,\hat t_2)$ by $(0.5,0.456,0.541)$ and  obtain $u_d = u_{(\hat \kappa,\hat t_1,\hat t_2)}.$
This is a traveling wave with a smaller velocity $v_d \approx 0.25$ due to the control term. To test our 
method, we apply our optimization algorithm to find $(\bar \kappa,\bar t_1,\bar t_2)$ such that the 
associated state function $u$ coincides with $u_d$.
The method should return a result  $(\kappa, t_1, t_2)$ that is close to 
the vector $(0.5,0.456,0.541)$.
\end{example}

To solve the problem (PS) in Example \ref{Ex1}, we applied the \textsc{Matlab} code \texttt{fmincon}.
 For the gradient $\nabla J_S(\kappa,t_1,t_2)$,  a subroutine was implemented by our adjoint calculus.
In this way, we were able to use the differentiable mode of \texttt{fmincon}.

During the optimization, we fixed $\kappa = 0.5$  and considered the minimization only with respect
to $(t_1,t_2)$. Taking $t_1 =0,\, t_2 = 1$ as initial iterate for the optimization, \texttt{fmincon} returned $t_1 = 0.456, \, t_2 = 0.541$
as solution; notice that we fixed $\nu = 0$. The result is displayed in Fig.\,\ref{fig:Ex1}.

\begin{figure}[h]\centering
  \includegraphics[scale=0.23]{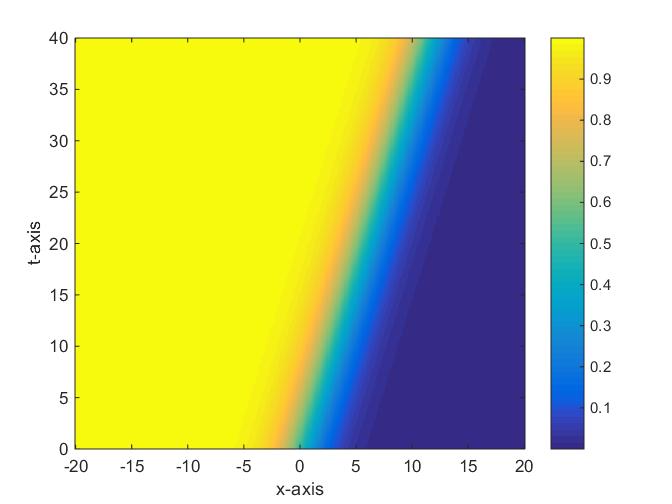}\hspace{20pt}
  \includegraphics[scale=0.23]{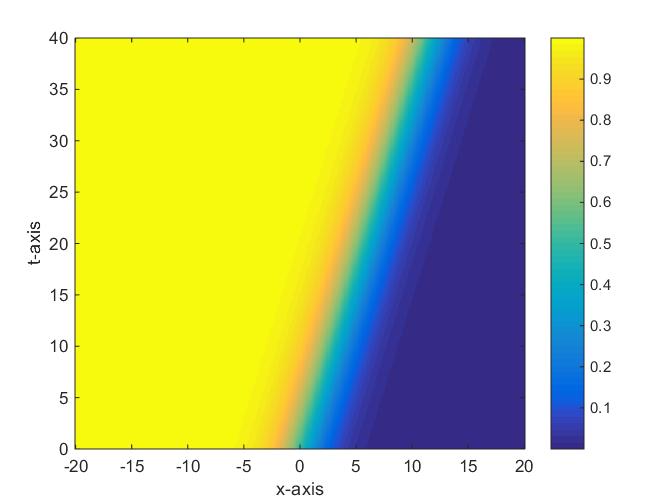}
  \caption{Example \ref{Ex1}, desired traveling wave $u_d$ (left) and computed optimal state  $u$ (right).}\label{fig:Ex1}
\end{figure}

\begin{example}[Stopping a traveling wave ]\label{Ex2} In contrast to Example \ref{Ex1}, here we fix the desired pattern $u_d$
that is displayed in Fig.\,\ref{fig:Ex2}, left side. This desired pattern was not pre-computed but geometrically designed, i.e.
it is a ''synthetic'' pattern that shows a traveling wave stopping at the time $t \approx 16$.
The other data are selected as in Example \ref{Ex1}.
\end{example}

In the optimization process for Example \ref{Ex2}, we fixed $\kappa = -1.5$. The initial iterate was $t_1 = 0, \, t_2 = 1$; \texttt{fmincon} returned
$t_1 = 0.05, \, t_2 = 0.94$ as solution. The optimal state is displayed in Fig.\,\ref{fig:Ex2}.

\begin{figure}[h]\centering
  \includegraphics[scale=0.23]{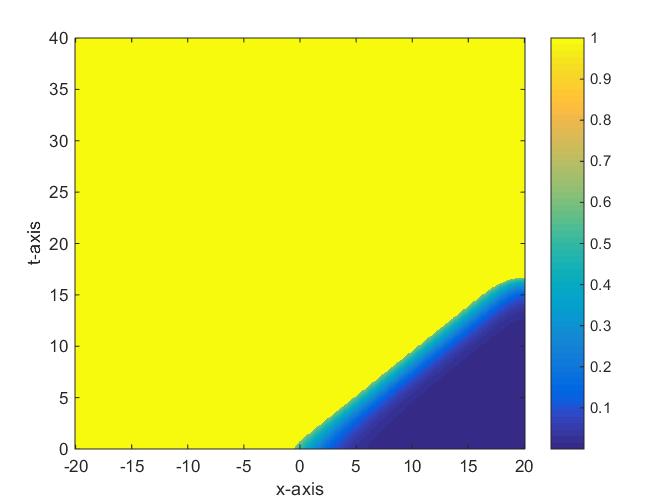}\hspace{20pt}
  \includegraphics[scale=0.23]{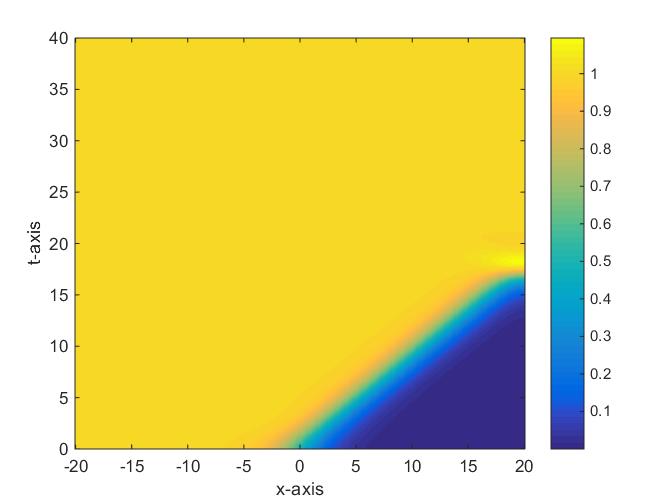}
  \caption{Example \ref{Ex2}, synthetic desired state $u_d$ (left) and computed optimal state  $u$ (right).}\label{fig:Ex2}
\end{figure}

The next example shows that the applicability of the standard tracking functional $J$ of (PS) is limited to simple
patterns $u_d$, e.g. wave  type solutions of constant velocity.

\begin{example}[Periodic pattern] \label{Ex3} Also here, $u_d$ is a synthetic pattern that was not precomputed.
In $\Omega = (-20,20)$, we define
\[
u_d(x,t) =  3 \, \sin(t - \cos(\frac{\pi}{40}(x+40))).
\]
Notice that this function $u_d$ obeys the homogeneous Neumann boundary conditions. It  is displayed in
Fig.\,\ref{fig:Ex3}, left. Since such a periodic pattern cannot be expected for small times, we consider the
tracking only on $[T/2,T]$. Therefore,  here we re-define the objective functional $J$ by
\[
J_S(\kappa,t_1,t_2) := \int_{T/2}^T\int_\Omega (u_{(\kappa,t_1,t_2)} - u_d)^2 \, dxdt + \frac{\nu}{2} (\kappa^2 + t_1^2+t_2^2).
\]
\end{example}
During the optimization run, we fixed the values $\kappa = -2$ and $t_1 = 0$ and optimized only with respect to $t_2$. Starting from
$t_2 = 2$, the code \texttt{fmincon} returned the solution $t_2 = 3.71$. At this point, the Euclidean norm  of
$\nabla J_S$ is $|\nabla J_S(-2,0,3.71)| \approx 0.045$. This is a fairly good value and indicates that the result
should be close to a local minimum. Nevertheless, the computed optimal objective value is very large,
\[
J_S(-2,0,3.71) = 1.32\cdot 10^3.
\]
\begin{remark}In this and in the next examples, we fix $t_1 = 0$. We observed in our computational examples that the optimization
with respect to $(\kappa,t_2)$ and $t_1 = 0$ yields sufficiently good results. Moreover, we found examples, where we got the same optimal
objective value of $J$ for very different triplets $(\kappa,t_1,t_2)$.
\end{remark}

\begin{figure}[h]\centering
  \includegraphics[scale=0.28]{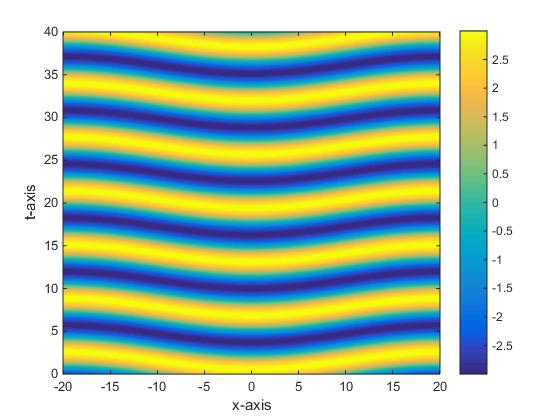}\hspace{20pt}
  \includegraphics[scale=0.23]{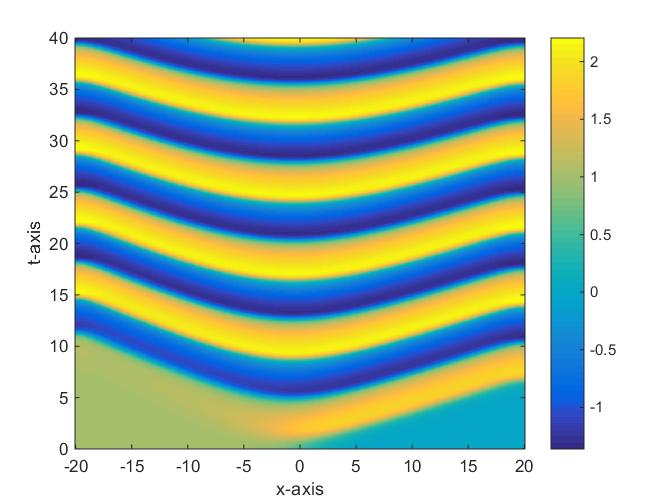}
  \caption{Example \ref{Ex3}, Desired periodic pattern $u_d$ (left) and computed optimal state  $u$ (right).}\label{fig:Ex3}
\end{figure}
The computed optimal state for Example \ref{Ex3} is far from the desired one. In particular,
the temporal periods are very different. The reason is that the standard quadratic tracking type functional $J$
is not able to resolve the desired periodicity. The main point is that the $L^2$-norm of the difference of a time-periodic function
$t \mapsto u_d(t)$ and its phase shifted function $t \mapsto u_d(t-s)$ can be very large, although both functions have
the same time period. For instance, in $Q = (-20,20)\times (0,40)$ we have 
\[
\iint_Q \left(3 \sin(t - \cos(\frac{\pi}{20}(x+20))) - 3 \sin(t - 3 - \cos(\frac{\pi}{20}(x+20)))\right)^2 dxdt =  I,
\]
where $I  \approx 1.4374\cdot 10^4$.
This brought us to considering another type of objective functionals that is discussed in the next section.

\section{Minimizing a cross correlation type functional}\label{S7}

\subsubsection*{The cross correlation}

As Example \ref{Ex3} showed,
we need an objective functional that better expresses our goal of approximating periodic structures.
This is the {\em cross correlation} between $u$ and $u_d$. Moreover, in the functional, we have to
observe a later part  of the time interval, where $u$  already has developed its periodicity.

Let us briefly explain the meaning of the cross correlation. Assume that $x_d: \mathbb{R} \to \mathbb{R}$ is
a periodic function and $x: [0,T] \to \mathbb{R}$ is another function; think of a function $x$ that is
identical with  $x_d$ after a time shift.

To see, if $x_d$ and $x$ are time shifts of each other, we consider the extremal problem
\begin{equation} \label{E:min_s}
\min_{s \in \mathbb{R}} \int_0^T (x(t) - x_d(t-s))^2 dt.
\end{equation}
If $x$ and $x_d$ are just shifted, then the minimal value in \eqref{E:min_s} should be zero by taking the correct time shift $s$.
The functional  \eqref{E:min_s} can be simplified. To this aim, we expand the integrand,
\[
\int_0^T (x(t) - x_d(t-s))^2 dt  = \int_0^T x^2(t)\, dt - 2 \int_0^T x(t)x_d(t-s) dt + \int_0^T x_d^2(t-s)\, dt.
\]
The first integral does not depend on  $s$. Since $x_d$ is a periodic function, also the last integral
is independent of the shift $s$. Therefore, instead of minimizing the quadratic functional above, we can solve
the following problem:
\begin{equation} \label{E:7.2}
\max_{s \in \mathbb{R}}  \frac{\int_0^T x(t)x_d(t-s) dt}{\|x\|_{L^2(0,T)} \|x_d\|_{L^2(0,T)}},
\end{equation}
where we additionally introduced a normalization.
The result is the so-called {\em cross correlation} between $x$ and $x_d$. The largest possible value in \eqref{E:7.2} is 1; in this case,
both functions are collinear.

In the application to our control problems, this induces two equivalent objective functionals.
The minimization problem \eqref{E:min_s} leads to the optimization problem
\begin{equation} \label{E:min_s_b}
\min_{(\kappa,t_1,t_2) \in C_\delta} \left( \, \min_{s \in \mathbb{R}}
\int_Q(u(x,t) - u_d(x,t-s))^2 dxdt+  \frac{\nu}{2}\, (\kappa^2 + t_1^2 + t_2^2)
\right).
\end{equation}

The other way around is an equivalent problem that uses the cross correlation,
\begin{equation} \label{E:min_s_corr}
\min_{(\kappa,t_1,t_2) \in C_\delta} J_{corr}(\kappa,t_1,t_2)
\end{equation}
where
\begin{equation}
J_{corr}(\kappa,t_1,t_2) :=  1 - \max_{s \in \mathbb{R}}
\frac{\ds \iint_Q  u(x,t)u_d(x,t-s) dxdt}{\ds \|u\|_{L^2(Q)} \|u_d\|_{L^2(Q)}}
+ \frac{\nu}{2}\, (\kappa^2 + t_1^2 + t_2^2).
\end{equation}

For solving \eqref{E:min_s_corr}, we applied the \textsc{Matlab} code  \texttt{pattern search} that turned out to be
fairly efficient in finding global minima for functions that exhibit multiple local minima.
In the periodic case, we have to deal with multiple local minima indeed.
Here, the cross correlation based functional $J_{corr}$ is more useful, as the next figure shows.

\begin{example}[Multiple local minima] \label{Ex4}
We pre-compute the desired state by $u_d = u_{(-2,0,2.5)}$ and consider the functions
$\kappa \mapsto J_S(\kappa,0,2.5)$ and $   \kappa \mapsto J_{corr}(\kappa,0,2.5)$ for $\kappa$ around the (optimal)
parameter $\bar \kappa = -2$.
\end{example}

The two functions defined in Example \ref{Ex4} are  shown in Fig.\,\ref{fig:Ex4}. The function  $\kappa \mapsto J_S(\kappa,0,2.5)$
behaves more wildly around $\kappa = -2$ than the function $\kappa \mapsto J_{corr}(\kappa,0,2.5)$ that
is based upon the cross correlation.

\begin{figure}[h]\centering
  \includegraphics[scale=0.23]{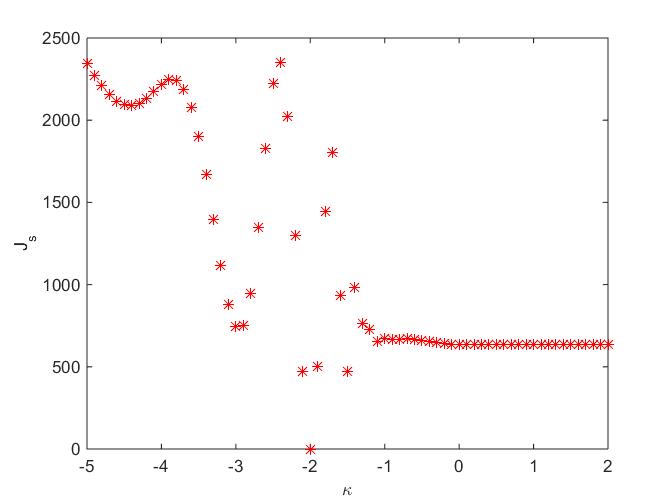}\hspace{20pt}
  \includegraphics[scale=0.23]{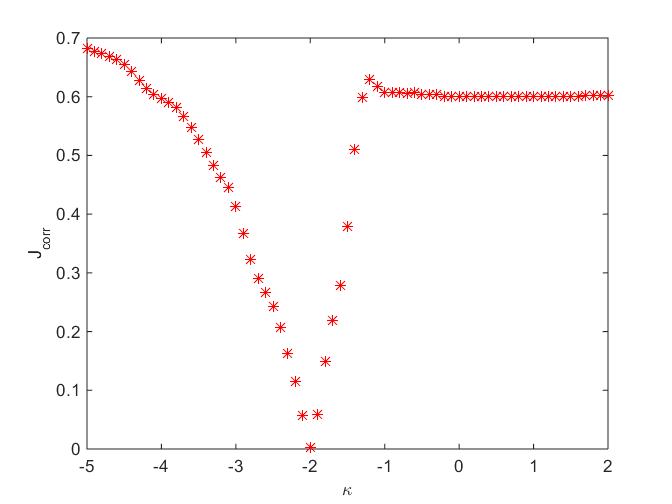}
  \caption{Example \ref{Ex4}, goal functions $\kappa \mapsto J_S(\kappa,0,2.5)$ (left) and $\kappa  \mapsto J_{corr}(\kappa,0,2.5)$ (right).}\label{fig:Ex4}
\end{figure}

Let us re-consider the optimization problem of Example \ref{Ex3}, but now by the cross correlation based
optimization problem \eqref{E:min_s_corr}. Here, we apply the following strategy:
We keep $t_1=0$ fixed and optimize only with respect to $(\kappa,t_2)$. Moreover, at the beginning we fixed
$\kappa = -2$ and optimized with respect to $\kappa$ in a second run. The computed solution was  $t_2 = 2.7631$
with a value $J_{corr}=0.1604, \, (t_1=0, \, \kappa =-2)$; as before the Tikhonov parameter was selected as $\nu = 0$.
The computed $u$ of this first step is shown in Fig.\,\ref{fig:Ex5}. Now the agreement, in particular of the temporal 
period, is much better. 

Next, we performed an alternating search for $(\kappa,t_2)$ starting with the result obtained in the first step.
We obtained $t_2 = 2.7631, \, \kappa = -2.4318$ and the improved objective value $J_{corr} = 0.1229$.
This improvement  is graphically hardly to distinct from Fig. \ref{fig:Ex5}.
\begin{figure}[h]\centering
  \includegraphics[scale=0.28]{ud.jpg}\hspace{20pt}
  \includegraphics[scale=0.28]{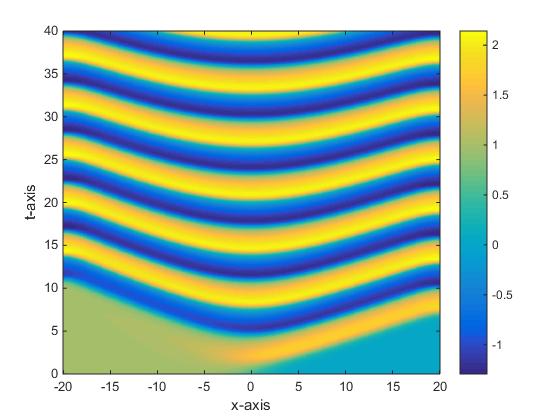}
  \caption{Example \ref{Ex3}, Desired pattern $u_d$ (left) and computed pattern after minimizing $J_{corr}$ with respect to $t_2$ (right)
  for fixed $\kappa = -2,\, t_1 = 0$.}\label{fig:Ex5}
\end{figure}
%
%
%


Finally, we consider another example with synthetic $u_d$ that has a larger period than $u_d$ of Example \ref{Ex3}.

\begin{example} \label{Ex7}We consider again $\Omega = (-20,20)$ and observe $u_d$ only in the time interval $[20,40]$.
For $u_d$ we select
\[
u_d(x,t) =  3 \, \ds\sin\left(\frac{t}{2} - \cos\left(\frac{\pi}{20}(x+20)\right)\right).
\]
Again, $t_1 = 0$ is fixed and the iteration is started with $t_2=2,\, \kappa=-2$. The optimal control parameters
are $t_2 = 6.94191, \, \kappa=-2.28$ with computed optimal objective value $J_{corr}=0.1209$. The computed 
optimal state is shown in Fig.\,\ref{fig:Ex7}.
\end{example}

\begin{figure}[h]\centering
  \includegraphics[scale=0.28]{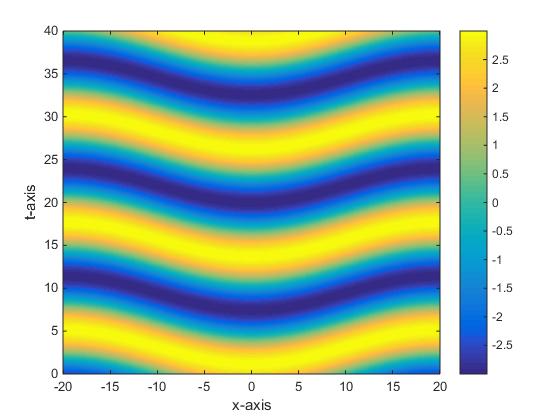}\hspace{20pt}
  \includegraphics[scale=0.28]{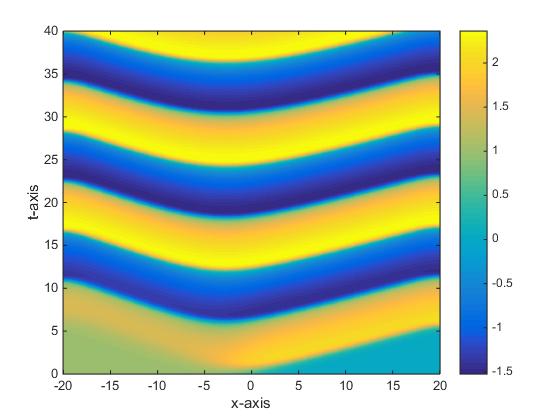}
  \caption{Example \ref{Ex7}, desired pattern $u_d$ (left) and optimal pattern (right)
  for fixed $t_1 = 0$.}\label{fig:Ex7}
\end{figure}


\end{document}